\useunder{\uline}{\ul}{}
  \providecommand\BibTeX{{%
    \normalfont B\kern-0.5em{\scshape i\kern-0.25em b}\kern-0.8em\TeX}}}
\begin{document}

%%
%% The "title" command has an optional parameter,
%% allowing the author to define a "short title" to be used in page headers.
\title{DPR: An Algorithm Mitigate Bias Accumulation in Recommendation feedback loops}
%%
%% The "author" command and its associated commands are used to define
%% the authors and their affiliations.
%% Of note is the shared affiliation of the first two authors, and the
%% "authornote" and "authornotemark" commands
%% used to denote shared contribution to the research.
\author{Hangtong Xu}
\affiliation{%
  \institution{MIC Lab, College of Computer Science and Technology, Jilin University}
  \city{Changchun}
  \country{China}
  }
\email{xuht21@mails.jlu.edu.cn}
\author{Yuanbo Xu}
\affiliation{%
  \institution{MIC Lab, College of Computer Science and Technology, Jilin University}
  \city{Changchun}
  \country{China}
  }
\email{yuanbox@jlu.edu.cn}
\authornote{Corresponding Author: Yuanbo Xu}
\author{Yongjian Yang}
\affiliation{%
  \institution{MIC Lab, College of Computer Science and Technology, Jilin University}
  \city{Changchun}
  \country{China}
  }
\email{yyj@jlu.edu.cn}
\author{Fuzhen Zhuang}
\affiliation{%
  \institution{Institute of Artificial Intelligence, Beihang University}
  \city{Beijing}
  \country{China}
  }
\email{zhuangfuzhen@buaa.edu.cn}
\author{Hui Xiong}
\affiliation{%
  \institution{Hong Kong University of Science and Technology (Guangzhou)}
  \city{Guangzhou}
  \country{China}
  }
\email{xionghui@ust.hk}
%%
%% By default, the full list of authors will be used in the page
%% headers. Often, this list is too long, and will overlap
%% other information printed in the page headers. This command allows
%% the author to define a more concise list
%% of authors' names for this purpose.
\renewcommand{\shortauthors}{Hangtong and Yuanbo, et al.}
%%
%% The abstract is a short summary of the work to be presented in the
%% article.
\begin{abstract}
  Recommendation models trained on the user feedback collected from deployed recommendation systems are commonly biased. User feedback is considerably affected by the exposure mechanism, as users only provide feedback on the items exposed to them and passively ignore the unexposed items, thus producing numerous false negative samples. Inevitably, biases caused by such user feedback are inherited by new models and amplified via feedback loops. Moreover, the presence of false negative samples makes negative sampling difficult and introduces spurious information in the user preference modeling process of the model. Recent work has investigated the negative impact of feedback loops and unknown exposure mechanisms on recommendation quality and user experience, essentially treating them as independent factors and ignoring their cross-effects. To address these issues, we deeply analyze the data exposure mechanism from the perspective of data iteration and feedback loops with the Missing Not At Random (\textbf{MNAR}) assumption, theoretically demonstrating the existence of an available stabilization factor in the transformation of the exposure mechanism under the feedback loops. We further propose Dynamic Personalized Ranking (\textbf{DPR}), an unbiased algorithm that uses dynamic re-weighting to mitigate the cross-effects of exposure mechanisms and feedback loops without additional information. Furthermore, we design a plugin named Universal Anti-False Negative (\textbf{UFN}) to mitigate the negative impact of the false negative problem. We demonstrate theoretically that our approach mitigates the negative effects of feedback loops and unknown exposure mechanisms. Experimental results on real-world datasets demonstrate that models using DPR can better handle bias accumulation and the universality of UFN in mainstream loss methods. 
  %Our code is available at \href{https://anonymous.4open.science/r/dpr}{https://anonymous.4open.science/r/dpr}.
\end{abstract}
%%
%% The code below is generated by the tool at http://dl.acm.org/ccs.cfm.
%% Please copy and paste the code instead of the example below.
%%
\begin{CCSXML}
<ccs2012>
   <concept>
       <concept_id>10002951.10003317.10003338</concept_id>
       <concept_desc>Information systems~Retrieval models and ranking</concept_desc>
       <concept_significance>500</concept_significance>
       </concept>
   <concept>
       <concept_id>10002951.10003317.10003347.10003350</concept_id>
       <concept_desc>Information systems~Recommender systems</concept_desc>
       <concept_significance>500</concept_significance>
       </concept>
 </ccs2012>
\end{CCSXML}

\ccsdesc[500]{Information systems~Retrieval models and ranking}
\ccsdesc[500]{Information systems~Recommender systems}

%%
%% Keywords. The author(s) should pick words that accurately describe
%% the work being presented. Separate the keywords with commas.
\keywords{Feedback Loops, Exposure Mechanism, MNAR, Negative Sampling}

% \received{20 February 2023}
% \received[revised]{12 March 2023}
% \received[accepted]{5 June 2023}

%%
%% This command processes the author and affiliation and title
%% information and builds the first part of the formatted document.
\maketitle
\section{Introduction}
Recommendation systems facilitate people's lives \cite{tois8,tois4}. They change the way of retrieving information through precise personalization and enhance the user experience. Existing models are dedicated to extracting user preferences from data collected from user feedback on previous system recommendations \cite{tois3,tois10,tois7,tois9} and generating new recommendations in the next stage. The interaction processes contained there induce a feedback loop: the recommendation model affects the user behavior data it observes, and the data affects the model trained on these data. As the model iterates, increasing bias accumulates in the data, hiding the performance of the trained model. \cite{tois6,tois2}.\par
\begin{figure}[tp]
\centering
\begin{minipage}[t]{0.48\linewidth}
\centering
\centerline{\includegraphics[height = 5cm]{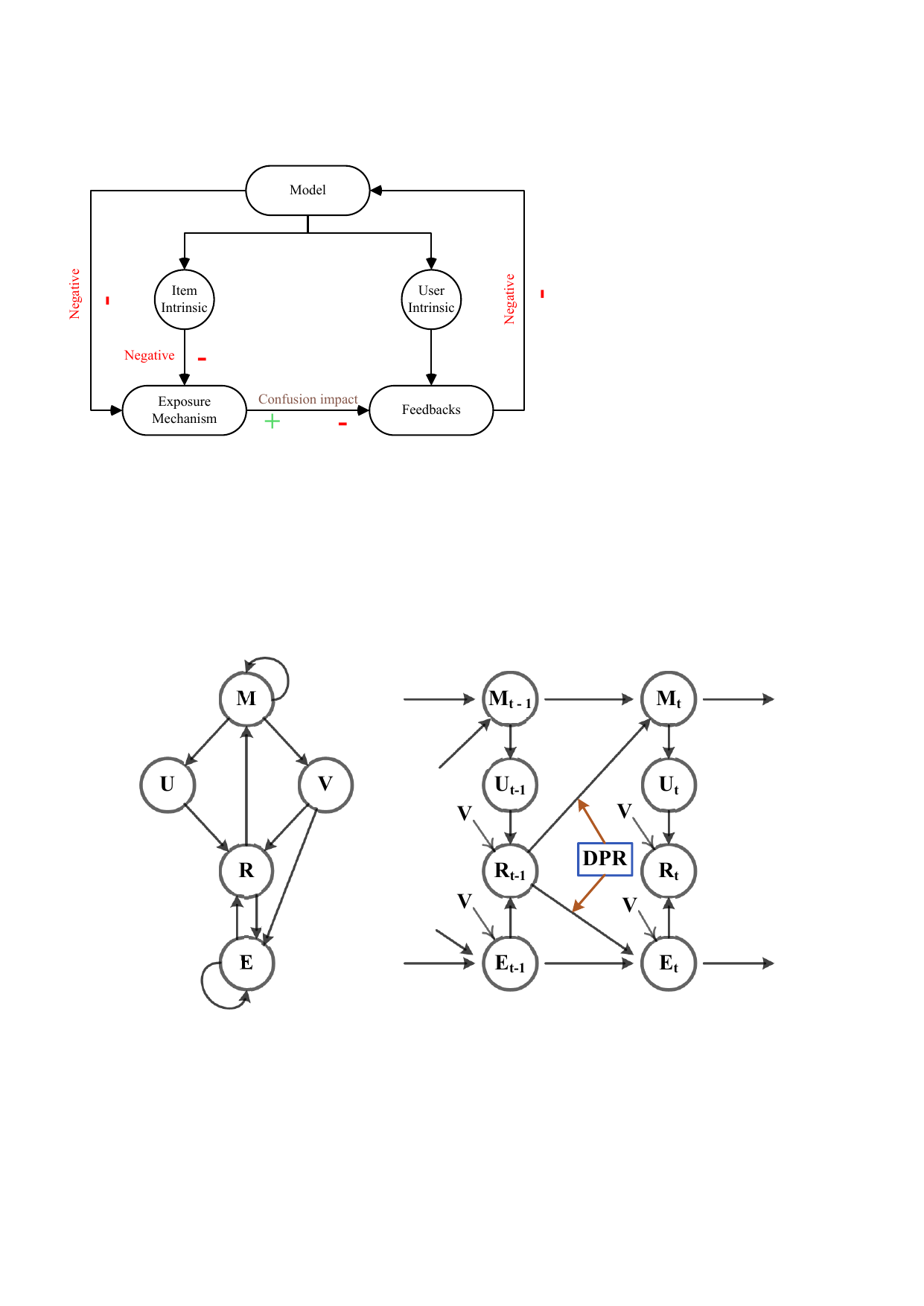}}
\centerline{(a)}
\end{minipage}
\hfill
\begin{minipage}[t]{0.48\linewidth}
\centering
\centerline{\includegraphics[height = 5cm]{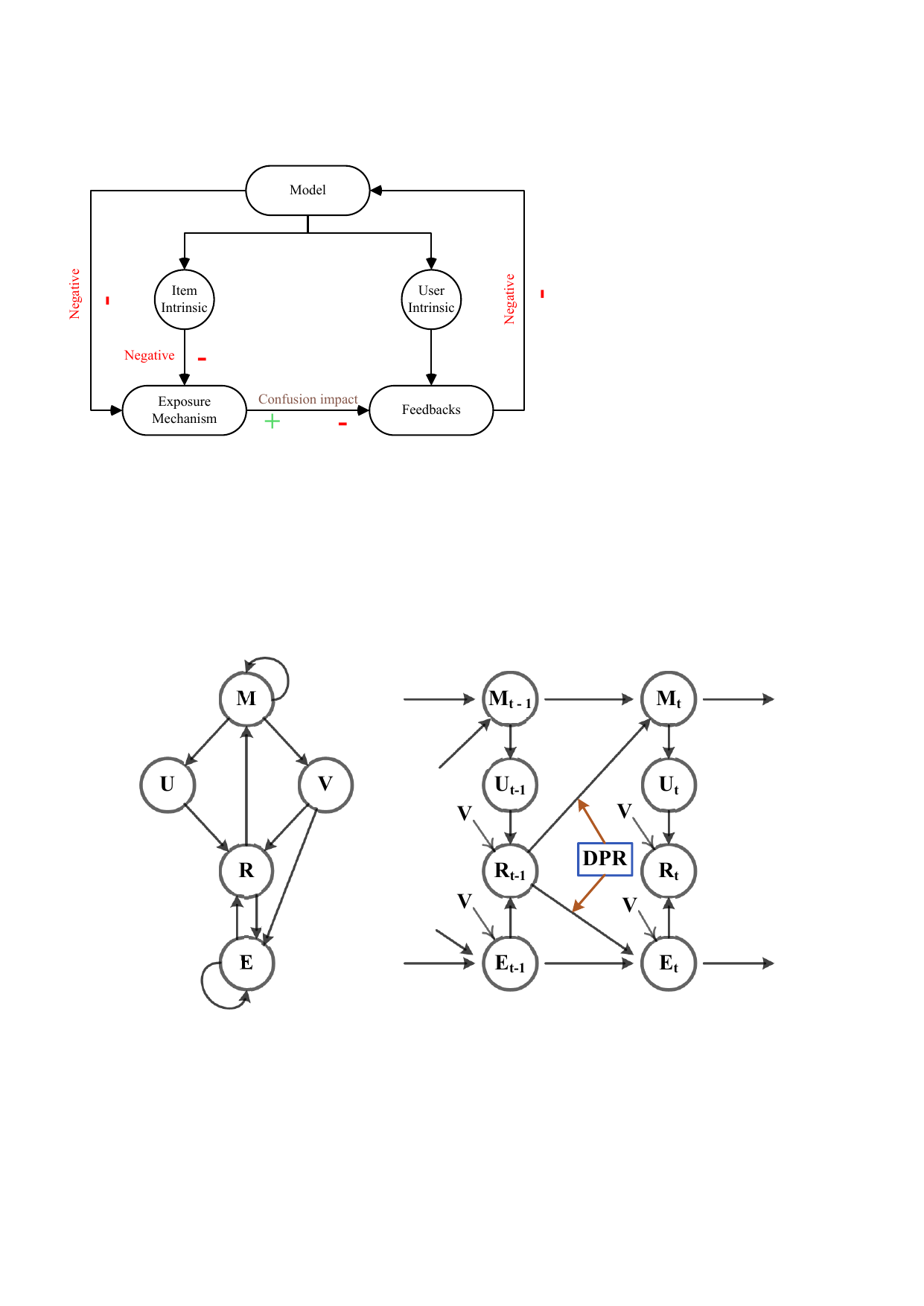}}
\centerline{(b)}
\centering
\end{minipage}
\caption{\textbf{U, V} are the intrinsic properties of users and items modeled by Model \textbf{M}. \textbf{R} is the user-item relevance score, and \textbf{E} is the exposure mechanism. (a) is a graph to show the cross-effects of the feedback loops and unknown exposure mechanisms on the modeling process. (b) is the time-expanded graph of (a).}
\label{fig:1}
\end{figure}
Furthermore, recent work \cite{tois5,ref4,ref5,ref12} shows that most systems rely on the assumption that user feedback is Missing At Random (\textbf{MAR}), which means that items are equally possible to be observed by users. However, in most scenarios, users cannot be exposed to all items, especially for e-commerce, there is a significant imbalance in the ratio of users to items. Thus, real-world data generally obeys the Missing Not At Random (\textbf{MNAR}) assumption. This means that missing interactions in the user feedback data cannot simply be considered negative, since items may be preferred by users but are not exposed to them. In this case, previous researchers directly treat unobserved items as negative samples, creating the false-negative sample problem and preventing the model from properly modeling user preferences. In recommendation systems, the MNAR assumption can be translated into the exposure mechanism, which determines the possibility of the item being exposed to users.\par
The impact of the feedback loops and the exposure mechanism on the recommendation system is not independent. The model (\textbf{M}) calculates the relevance score (\textbf{R}) of user-item pairs for ranking and recommends high-rankers to the user, thus changing the initial exposure mechanism (\textbf{E}) that items have the same possibility to be exposed to the user in the user-item interaction phase, the presence of the feedback loops makes the bias accumulate and get serious. A typical graph is displayed in Figure \ref{fig:1}. For example, early popular items are more likely to be recommended to users than cold start items, because the recommendation system passively observes more interaction data for popular items, which causes the “rich-get-richer” problem.\par
In this work, we study the exposure mechanism under the feedback loops with the MNAR assumption and propose Dynamic Personal Ranking (\textbf{DPR}), an algorithm that can provably mitigate the cross-effects of feedback loops and exposure mechanisms in recommendation systems. We have a key observation: the recommendation system does have an impact on the exposure mechanism and keeps accumulating in the feedback loops, and an available stabilization factor in the transformation of the exposure mechanism under the feedback loops. Bias in feedback data collected from recommendation systems increases and is inherited by the new model as the length of the feedback loops increase. Breaking the chain of influence of feedback data collected in the feedback loops on the model and the exposure mechanism is a possible solution to free recommendations from bias accumulation. To achieve this goal, we design an unbiased algorithm, DPR, a pairwise loss function that uses dynamic re-weighting to eliminate accumulation bias in the data, as shown in Figure \ref{fig:1} (b). DPR employs the stabilization factor in the above-mentioned exposure mechanism transition process as the weight, enabling the model to be free from accumulated bias in data and modeling true user preference without additional information, no matter how long the feedback loops. In addition, we introduce the Universal Anti-False Negative (\textbf{UFN}), a removable plug-in to solve the common problem of false negatives in pairwise ranking methods. UFN enables the model to combat false negative samples and reduce the impact of the false information they carry.\par
The main contributions of our work are summarized as follows:\par
\begin{itemize}
    \item We use theoretical and empirical analysis to reveal the cross-effects of feedback loops and exposure mechanisms on recommendation systems. We find an available stabilization factor in the transformation of the exposure mechanism under the feedback loops.\par
    \item We propose a pairwise loss function named Dynamic Personalized Ranking (\textbf{DPR}) to eliminate accumulation bias in real-world datasets and achieves unbiased recommendation with theoretical accuracy guarantees.\par
    \item We introduce a plug-in named Universal Anti-False Negative (\textbf{UFN}) to mitigate the false negative sample problem to improve training quality, help the model combat false negative samples and reduce the impact of the false information they carry.\par
    \item We conduct empirical studies on the simulation dataset and six real-world datasets to demonstrate DPR can mitigate bias accumulation in recommendation feedback loops and the universality of UFN in mainstream loss methods.\par
    \item We conduct experiments on widely used models such as MLPs, and DIN to demonstrate the ease of use and effectiveness of DPR. 
\end{itemize}
\begin{table}[tp]
\renewcommand\arraystretch{1.5}
    \centering
    % \tiny
    \caption{Comparison between different Pairwise Ranking methods.}
    \label{tab:1}
\resizebox{0.7\linewidth}{!}{
    \begin{tabular}[b]{cccccc}
    \bottomrule
    &\textbf{MANR Assumption} & \textbf{Exposure Mechanism} & \textbf{Negative Sample} & \textbf{Debias} & \textbf{Backbone} \\
    \midrule
    BPR&  & & & & MF\\
    \midrule
    UBPR& \footnotesize\checkmark  & \footnotesize\checkmark& &\footnotesize\checkmark & MF\\
    \midrule
    EBPR&  &  & &\footnotesize\checkmark & MF\\
    \midrule
    CPR&  &  &\footnotesize\checkmark &\footnotesize\checkmark & MF\\
    \midrule
    PDA&  & & & \footnotesize\checkmark & MF\\
    \midrule
    UPL& \footnotesize\checkmark & \footnotesize\checkmark&  & \footnotesize\checkmark & MF\\
    \midrule
    DPR& \footnotesize\checkmark & \footnotesize\checkmark& \footnotesize\checkmark& \footnotesize\checkmark & MF\\
    \bottomrule
    \end{tabular}}
\end{table}
\section{Background}
In this section, we briefly review previous work from three perspectives: feedback loops, MNAR problems, and pairwise ranking.
\subsubsection*{Feedback loops} Recent research has shown that ignoring feedback effects will negatively impact recommendation system performance \cite{ref1,ref2,ref16,ref17}. Feedback loops can introduce bias in the collected user feedback data, such as popularity bias, selection bias, and homogenization \cite{ref18,ref16}. With the growth of loops, the bias is amplified and accumulates \cite{ref19}. Therefore, breaking the loop to correct feedback effects is a direct way to solve the problem. A common solution is to use the causal mechanism to break the loop \cite{ref13,ref14,ref25} so that the model does not suffer from feedback loops in muti-step recommendations or combine inverse propensity weighting (IPW) with active learning to correct for feedback effects \cite{ref20,ref21,xu01}. However, they focus on the feedback loops without considering the exposure mechanism, which is equally important.
\subsubsection*{MNAR problem} The missing–not-at-random (MNAR) problem has been extensively studied in recommendation systems: recommendation systems aim to infer missing ratings over a static dataset \cite{ref22,ref23,ref24}. One prevalent solution is Inverse Propensity Scoring (IPS) \cite{ref6}, where popular items are down-weighted and unpopular items are up-weighted, but if the model cannot accurately estimate the propensity score of each sample, it can not reach the theoretically unbiased. Simply using the inverse propensity score as a substitute for the exposure mechanism ignores the dynamic nature of the exposure mechanism under feedback loops. Hence, IPS-based methods do not properly deal with the effect of unknown exposure mechanisms.
\subsubsection*{Pairwise ranking} There are two types of loss functions that are generally used to optimize models: pointwise loss and pairwise loss. Pairwise loss received more praise due to its high accuracy compared to the pointwise loss. Pairwise loss learns user preferences over two items by maximizing the prediction of positive items over negative ones. Bayesian Personalized Ranking (BPR) \cite{ref8} is the first loss function that uses a pairwise ranking setting. Since it does not consider the bias in data, there are many improved versions of it, such as Unbiased Bayesian Personalized Ranking (UBPR) \cite{ref4}, Cross Pairwise Ranking (CPR) \cite{ref7} and Explainable Bayesian Personalized Ranking (EBPR) \cite{ref12}. Although they achieve unbiased recommendations, they ignore the feedback loops that affect the performance of the recommendation system and suffer from false negative problems. Here, we briefly summarized the common pairwise ranking methods in Table \ref{tab:1} with respect to four dimensions.
\section{Problem Formulation}
In this section, we first show how the feedback loops affect the exposure mechanism, and provide an in-depth analysis of the exposure mechanism, identify the key factors that cause bias. Then, we demonstrate existing pairwise ranking methods are biased in the presence of feedback loops. We give the notation in Table \ref{tab:2}.
\begin{table}[tp]
\renewcommand\arraystretch{1.25}
    \centering
    \caption{Notation.}
    \label{tab:2}
    \scalebox{1}{
    \begin{tabular}[b]{ll}
    \bottomrule
    \makebox[0.05\textwidth][l]{Symbol} & \makebox[0.3\textwidth][c]{Description} \\
    \toprule
    $M$& The number of users. \\
    $N$& The number of items. \\
    $\bm{S}$ & $M \times N$  rating matrix. \\
    % $s_{u,i}$ & Rating of item i by user i.\\
    $\bm{Y}$ & $M \times N$  recommendation matrix.\\
    $\bm{R}_{u,i}$ & The relevance between user u and item i.\\
    $\bm{O}(i)$ & The probability of the item i exposed to the users.\\
    $\bm{U}$& Low dimensional user factors.\\
    $\bm{V}$& Low dimensional item factors.\\
    $T$ & The feedback loops length, larger T, implies longer feedback loops.\\
    \bottomrule
    \end{tabular}}
\end{table}
\subsection{Feedback loops in Recommendation system}
Recommendation systems make recommendations and collect user feedback, and then train the next model based on collected feedback, a process that induces a feedback loop (Figure \ref{fig:1}). The recommendation system forces the exposure mechanism to change the probability of an item being exposed to the user. For example, a personalized recommendation system aims to recommend items that the user prefers as much as possible. Such items will have a higher probability of being exposed to users than others. As a result, we can observe more interactions of such items in the collected feedback, and the next model trained on such feedback might infer that the user likes it and continue to recommend it. However, the recommendation system cannot know all user preferences perfectly which leads to biased feedback data. The new model optimizes its parameter $\hat{\Theta}$ can be seen as a maximum likelihood estimator under the feedback data:
\begin{equation}
\begin{aligned} 
\hat{\Theta} &=\underset{\Theta}{\arg \min } \mathbb{E}_{\bm{Y}}\left\{\mathbb{K} \mathbb{L}\left[P\left(\bm{S} \mid \bm{Y}\right) \| P_{\Theta}\left(\bm{S} \mid \bm{Y}\right)\right]\right\} \\ &=\underset{\Theta}{\arg \max } \mathbb{E}_{\bm{Y}}\left\{\mathbb{E}_{P\left(\bm{S} \mid \bm{Y}\right)}\left[\log P_{\Theta}\left(\bm{S} \mid \bm{Y}\right)\right]\right\},
\end{aligned}
\label{eq:1}
\end{equation}
where {\small $P\left(\bm{S} \mid \bm{Y}\right)$} indicates the user preference we can learn from the feedback data, and {\small $P_{\Theta}\left(\bm{S} \mid \bm{Y}\right)$} is the inference of user preference by the new model. It is obvious from Equation \ref{eq:1} that models try to fit the feedback data, which means that models will inherit the biased user preferences in the collected feedback data and make sub-optimal recommendations. From this, we make the assumption that the original recommendation system does affect the exposure mechanism, which is reflected in the feedback data and will be inherited by the new model in the feedback loops.\par
To validate our assumption, we choose two datasets for analysis: Coat\footnote{\href{https://www.cs.cornell.edu/ schnabts/mnar/}{https://www.cs.cornell.edu/ schnabts/mnar/}} and KuaiRec\footnote{\href{https://chongminggao.github.io/KuaiRec/}{https://chongminggao.github.io/KuaiRec/}} \cite{kuai}. Both have full-observed and partial-observed data, which better reflect the difference in exposure mechanisms under and out of the feedback loops. Coat obtains fully observed data from questionnaires under randomized control and thus does not suffer from feedback loops, while KuaiRec does the opposite and obtains data from an online recommendation system without randomized control where impact of the recommendation system and feedback loops cannot be ignored. We choose the positive interaction counts of items in feedback data as the dominant indicator of the exposure mechanism, treating the full observed data as Missing At Random (MAR) data and the partial-observed data as Missing Not At Random (MNAR) data for Coat, and both the full observed data and the partial-observed data in Kuairec are Missing Not At Random (MNAR) data. The MNAR and MAR data are representative of the biased and unbiased exposure mechanisms, respectively. We process both datasets into implicit feedback without loss of generality, the results are shown in Figure \ref{fig:2}. \par
\begin{figure}[tp]
\centering
\begin{minipage}[t]{0.48\linewidth}
\centering
\includegraphics[width= 0.9\linewidth]{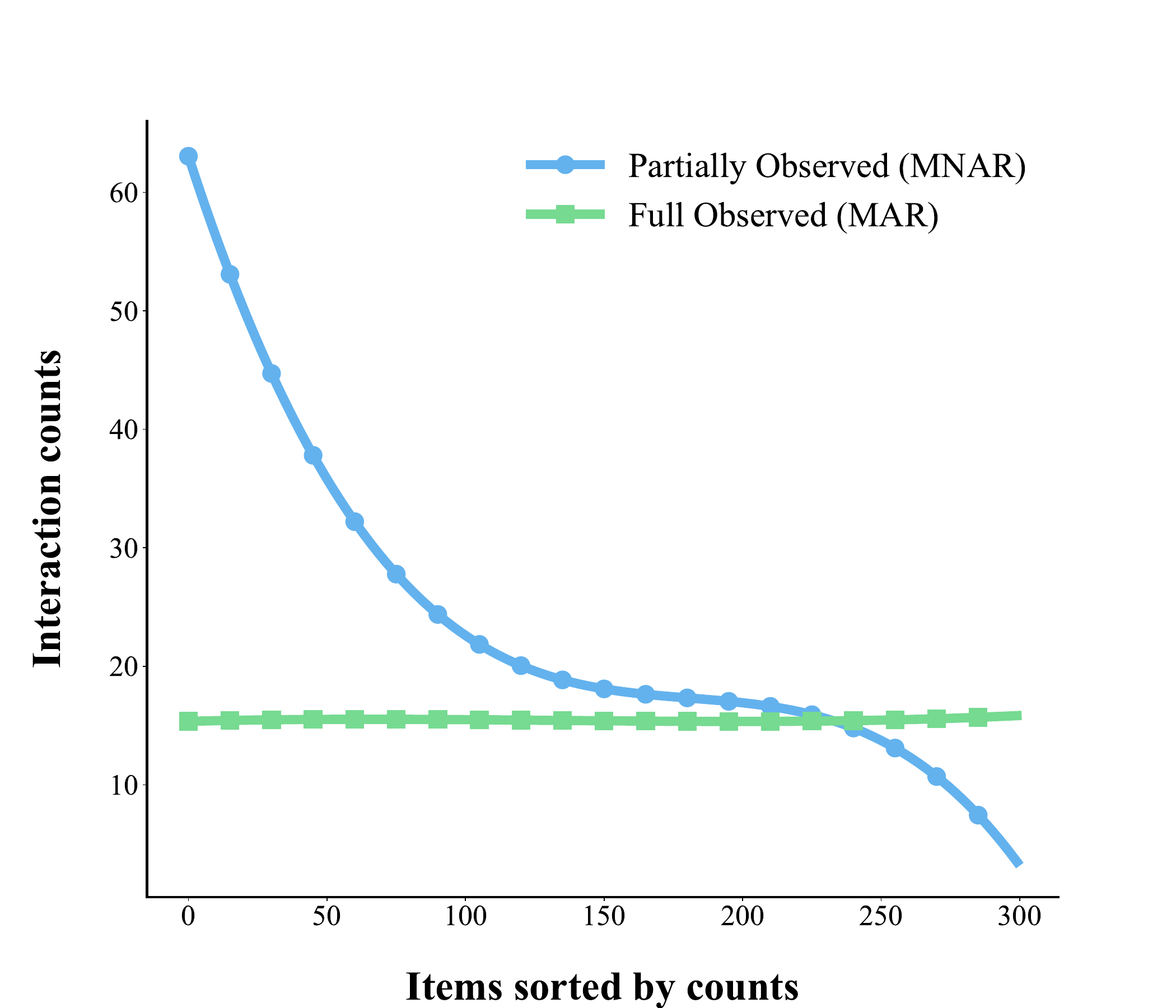}
\centerline{(a) Coat}
\end{minipage}
\begin{minipage}[t]{0.48\linewidth}
\centering
\includegraphics[width= 0.9\linewidth]{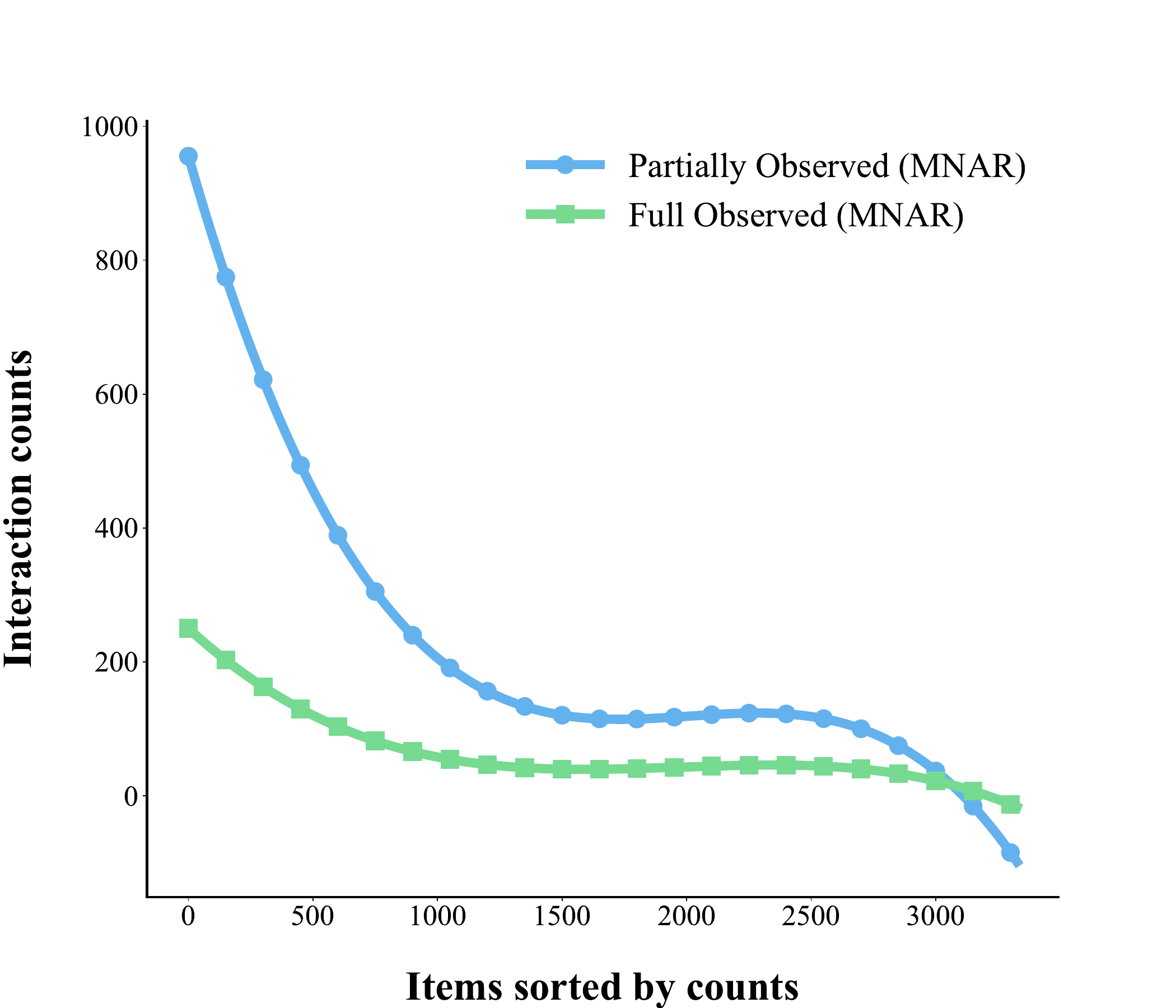}
\centerline{(b) KuaiRec}
\end{minipage}
\caption{The unknown exposure mechanism under different length feedback loops.}
\label{fig:2}
\end{figure}
The results show clear differences between the full observed data and the partial-observed data in Coat and Kuairec. The difference between the full observed data and the partial-observed data in Figure \ref{fig:2} indicates that the exposure mechanism does have an impact on the user-item interaction, thus most of the datasets in real scenario are long-tailed distributions. Otherwise, MAR data is not affected by feedback loops, and items in MAR data have nearly the same ratio, meaning that items have the same probability of being exposed to users, as shown in Figure \ref{fig:2} (a). The MNAR data is affected by both feedback loops and exposure mechanisms, so the items have different exposure probabilities and show significantly different trends with respect to the MAR data. However, the KuaiRec dataset, which suffers more from feedback loops, shows a high similarity between the full observed data and the partial-observed data, with a significantly higher ratio of head-to-tail items in partial-observed data compared to Coat. The results between the two types of datasets justify the above assumption, the exposure mechanism suffers from feedback loops.
\subsection{Exposure mechanism}

The exposure mechanism plays an indispensable role in the user-item interaction process by determining the exposure probability of the item to the user, which we can describe as {\small $\bm{O}(i)$}. Interaction occurs when the user is interested in the item and observes it:
\begin{equation}
\begin{small}
\begin{aligned} 
P\left(\bm{S}_{u,i} = 1\right) = P\left(\bm{R}_{u,i} = 1\right) \cdot \bm{O}(i),\\
\end{aligned}
\label{eq:2}
\end{small}
\end{equation}
{\small $P(\bm{S}_{u,i} = 1)$} indicates that the user rates the item, which means that an interaction has occurred. If the exposure mechanism is not affected by recommendation systems, such as the system makes random recommendations, which means feedback data obey MAR assumption, and items have the same opportunity to be exposed to users, leading to:
\begin{small}
\begin{equation}
\begin{aligned} 
\bm{O}(i) = \frac{1}{N}.
\end{aligned}
\label{eq:3}
\end{equation}
\end{small}

However, most recommendation systems make personal recommendations to improve the quality of recommendations and the user experience, which inevitably affects the exposure mechanism. To measure the change in the exposure mechanism, we can derive the exposure probability from the feedback data because we theoretically have access to all the data. That is:
\begin{equation}
\begin{small}
\begin{aligned} 
\bm{O}(i) &= \frac{\sum_{u = 1}^{M}\bm{S}_{u,i}}{\sum_{i = 1}^{N}\sum_{u = 1}^{M} \bm{S}_{u,i}}\\
&= \frac{\sum_{u = 1}^{M}P\left(\bm{S}_{u,j}= 1\right)}{\sum_{j = 1}^{N}\sum_{u = 1}^{M} P\left(\bm{S}_{u,j} = 1\right)}.\\
\end{aligned}
\label{eq:4}
\end{small}
\end{equation}
Since our problem is focused on the exposure mechanism under the feedback loops, we further transform Equation \ref{eq:4} into a more general formulation of the pattern. The current exposure mechanism at T = $t$ is mostly depend on the relevance score predicted by the current model and the past exposure mechanism at T = $t-1$.
\begin{equation}
\begin{small}
\begin{aligned} 
\bm{O}^{t}(i) = \frac{\sum_{u = 1}^{M}P\left(\bm{R}_{u,i}= 1\right)\cdot\bm{O}^{t - 1}(i)}{\sum_{j = 1}^{N}\sum_{u = 1}^{M} P\left(\bm{R}_{u,j}= 1\right)\cdot\bm{O}^{t - 1}(j)}.\\
\end{aligned}
\label{eq:5}
\end{small}
\end{equation}
Based on Equation \ref{eq:5}, we can calculate the difference of the exposure mechanisms after one feedback loop.
\begin{equation}
\begin{small}
\begin{aligned} 
\Delta \bm{O}^{t}(i) &= \bm{O}^{t}(i) - \bm{O}^{t - 1}(i) = \frac{\sum_{u = 1}^{M}P\left(\bm{R}_{u,i}= 1\right) - C}{C} \cdot \bm{O}^{t - 1}(i),\\
&where \quad C = \sum_{j = 1}^{N}\sum_{u = 1}^{M} P\left(\bm{R}_{u,j}= 1\right).
\end{aligned}
\label{eq:6}
\end{small}
\end{equation}
From Equation \ref{eq:6}, we can learn that items with high popularity can have low $\Delta \bm{O}^{t}(i)$, which means that the exposure probability will be higher compared to unpopular items. This can better explain why mainstream models suffer from the "rich-get-richer” problem. For ease of calculation, we extract the relationship between variation in exposure mechanism and static data.
\begin{equation}
\begin{small}
\begin{aligned} 
\Delta \bm{O}^{t}(i) \propto \sum_{u = 1}^{M}P\left(\bm{R}_{u,i}= 1\right).
\end{aligned}
\label{eq:7}
\end{small}
\end{equation}
If the data is collected from a recommendation system that makes random recommendations or through other methods such as surveys, we can regard it as free from the feedback loops, which means $T = 0$. Then we can have an iterative computation between $\bm{O}^{t}(i)$ and $\bm{O}^{t - 1}(i)$:
\begin{equation}
\begin{small}
\begin{aligned} 
\bm{O}(i) &= \frac{1}{N} \quad when\ T = 0,\\
\bm{O}^{t}(i) &= (1 + \sum_{u = 1}^{M}P\left(\bm{R}_{u,i}= 1\right)) \bm{O}^{t - 1}(i).
\end{aligned}
\label{eq:8}
\end{small}
\end{equation}
By Equation 8, we can calculate the $\bm{O}^{t}(i)$ at $T = t$ with no known exact exposure mechanism in the current timestamp, which can be formulated as:
\begin{equation}
\begin{small}
\begin{aligned} 
\bm{O}^{t}(i) = (1 + \sum_{u = 1}^{M}P\left(\bm{R}_{u,i}= 1\right))^{\alpha}.
\end{aligned}
\label{eq:9}
\end{small}
\end{equation}
where $\alpha$ is the frequency weight of the exposure mechanism that suffers from feedback loops. However, direct computation of relevance score between user-item pairs is time-consuming and the accuracy of the computed results is difficult to guarantee, so we need to find a replacement. From Equation \ref{eq:2}, we can obtain an inequality between the relevance score and the observed ratings:
\begin{equation}
\begin{small}
\begin{aligned} 
P\left(\bm{R}_{u,i}= 1\right) &\geq P\left(\bm{S}_{u,i}= 1\right) \quad Equal\ when\ \bm{O}^{t - 1}(i) = \frac{1}{N}.
\end{aligned}
\label{eq:10}
\end{small}
\end{equation}
Then the lower bound on $\bm{O}^{t}(i)$ can be directly computed from the observed ratings without external information:
\begin{equation}
\begin{small}
\begin{aligned} 
\bm{O}^{t}(i) &\geq \left(1 + \sum_{u = 1}^{M}P\left(\bm{S}_{u,i}= 1\right)\right)^{\alpha}.
\end{aligned}
\label{eq:11}
\end{small}
\end{equation}
The calculation method of the lower bound is understandable, existing methods such as UBPR and Rel-MF typically exploit the interaction counts of items as the main source of propensity scores and achieve better performance, but they focus on the current timestamp and ignore the variability of exposure mechanisms. Otherwise, Eq. 11 tells us that it is fallacious to simply use the interaction counts of the items, and therefore it is suboptimal to compute the propensity score based on the interaction counts.
\begin{figure}[tp]
\centering
\includegraphics[scale = 0.5]{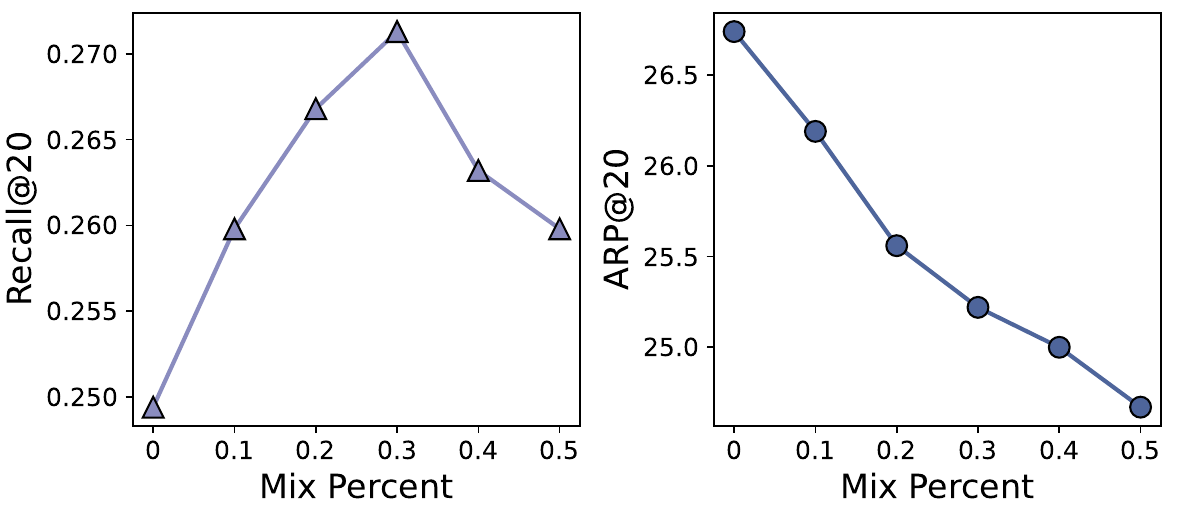}
\caption{Performance between different Exposure mechanisms, a higher value of the mixing percentage indicates a higher proximity to the original exposure mechanism.}
\label{fig:3}
\end{figure}
\par
This finding makes us aware that the exposure mechanism suffers from feedback loops, which will harm the new model. So the key to unbiased recommendation is to free the model from the loops. We conduct an experiment on Coat to prove that the change in the exposure mechanism does improve the quality of the recommendation. The performance of the backbone model MF in different exposure mechanisms is presented in Figure \ref{fig:3}. We mix various percentages of MAR data into the training set (MNAR data) to represent different exposure mechanisms. Increasing the percentage of mixing implies a closer fit to the original exposure mechanism. \par
As shown in Figure 3, the performance of the model does improve at the beginning, which proves that the change in the exposure mechanism does improve the performance of the model. However, the performance of the model deteriorates as the exposure mechanism closes to the original exposure mechanism up to a point since the change in the exposure mechanism is not always negative. For example, the fashion of clothes changes over time, and the corresponding styles of clothes that are popular at a certain time are bound to be popular with the public, therefore the exposure mechanism that gives preferential attention to such clothes will improve the performance of the recommendation system and user satisfaction, which is clearly different from the original exposure mechanism. Moreover, the average popularity of the final recommended items keeps decreasing as the similarity between the exposure mechanism and the original exposure mechanism increases, which indicates that the original exposure mechanism benefits the recommendation of cold items. \par
In summary, the effects of unknown exposure mechanisms on model performance under feedback loops are confounded, and the key to solving the problem is to maintain its beneficial effects while eliminating its negative ones.
\subsection{Bias of pairwise ranking algorithm}
Pairwise ranking-based methods, which learn a user's preference for two items by maximizing the predicted score of a positive item, are always biased. Since the choice of positive and negative items depends on the observed interactions instead of the relevance of user-item. Previous studies \cite{ref4,ref5,ref6} have defined unbiasedness in terms of expectation, making the expectation of the loss equal to the ideal loss, which suffers from practical limitations, such as the high variance of the reweighted loss. Therefore, we propose a more universal definition of unbiasedness similar to \cite{ref7}:
\begin{definition}
A loss function $\mathcal{L}$ is unbiased if it optimizes the ranking of the predicted score of user-item pairs to that of the true relevance score:
\begin{equation}
\begin{small}
    \begin{aligned}
    \hat{s}_{u,i} > \hat{s}_{u,j} \Leftrightarrow r_{u,i} > r_{u,j}\quad when\ \mathcal{L}\ converges. 
    \end{aligned}
\end{small}
\label{eq:12}
\end{equation}
where $\hat{s}_{u,i}$ is the predicted score of the user-item pair, and the $r_{u,i}$ is the true relevance score.
\end{definition}
Most of the paired ranking methods derive from Bayesian Personalized Ranking (BPR) \cite{ref8}:
\begin{equation}
\begin{small}
\begin{aligned} 
\hat{\mathcal{L}}_{BPR} = - \frac{1}{|\mathcal{D}_{pair}|} \sum_{u,i,j \in \mathcal{D}_{pair}} \ln \sigma \left(\hat{s}_{u,i} - \hat{s}_{u,j}\right),
\end{aligned}
\label{eq:13}
\end{small}
\end{equation}
where {\small $\mathcal{D}_{pair} = \left\{\left(u,i,j\right) \mid \mathbf{S}_{u,i} = 1,\mathbf{S}_{u,j} = 0\right\}$ } is the set of all possible triplets for pairwise ranking. Then with Equation \ref{eq:13} we prove that BPR is biased.
\begin{proposition}
Pairwise ranking methods are biased.
\end{proposition}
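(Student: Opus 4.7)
The plan is to show that the minimiser of the expected BPR loss orders items by $P(\bm{S}_{u,i}=1)$ rather than by the true relevance $P(\bm{R}_{u,i}=1)$, and that these two orderings disagree as soon as the exposure mechanism $\bm{O}$ is non-uniform, which by Equation \ref{eq:2} is precisely the situation under MNAR and feedback loops.

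First I would take the expectation of $\hat{\mathcal{L}}_{BPR}$ over the data-generating process. Since a triplet $(u,i,j)$ enters $\mathcal{D}_{pair}$ exactly when $\bm{S}_{u,i}=1$ and $\bm{S}_{u,j}=0$, for a fixed user $u$ the expected loss can be written, up to a normalising constant, as
\begin{equation*}
\mathbb{E}\bigl[\hat{\mathcal{L}}_{BPR,u}\bigr] \;=\; -\sum_{i,j} P\left(\bm{S}_{u,i}=1\right)\,P\left(\bm{S}_{u,j}=0\right)\,\ln \sigma\!\left(\hat{s}_{u,i}-\hat{s}_{u,j}\right).
\end{equation*}
Substituting Equation \ref{eq:2} gives coefficients of the form $P(\bm{R}_{u,i}=1)\,\bm{O}(i)\,\bigl(1-P(\bm{R}_{u,j}=1)\,\bm{O}(j)\bigr)$, which couple relevance with exposure. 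The ideal (unbiased) pairwise loss corresponding to Definition 1 would instead carry coefficients depending only on $P(\bm{R}_{u,\cdot}=1)$.

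Next I would characterise the minimiser. Differentiating with respect to $\hat{s}_{u,i}$ and setting the gradient to zero yields stationarity conditions in which each item's score is pushed up in proportion to the total weight of pairs where it appears as the positive element and down in proportion to the weight of pairs where it appears as the negative element. A standard monotonicity argument on sums of $-\ln\sigma(\cdot)$ terms then shows that the ordering of the minimising scores $\hat{s}_{u,i}$ must coincide with the ordering of $P(\bm{S}_{u,i}=1)$, i.e.\ of $P(\bm{R}_{u,i}=1)\,\bm{O}(i)$. This is the main technical step, and the hard part will be justifying this ordering cleanly without assuming strong convexity of the expected loss in a multi-item setting; I expect to handle it by a pairwise swap argument showing that if some $i,j$ violated the stated order, exchanging their scores would strictly decrease the expected loss.

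Finally, I would close the proof by exhibiting the bias. Whenever the exposure mechanism is non-uniform, as established by Equations \ref{eq:6}--\ref{eq:9} for any feedback loop of length $T\geq 1$, one can pick two items $i,j$ with $P(\bm{R}_{u,i}=1) > P(\bm{R}_{u,j}=1)$ yet $P(\bm{R}_{u,i}=1)\,\bm{O}(i) < P(\bm{R}_{u,j}=1)\,\bm{O}(j)$. By the previous step the converged BPR scores satisfy $\hat{s}_{u,i} < \hat{s}_{u,j}$ although $r_{u,i} > r_{u,j}$, which directly violates the equivalence required in Definition 1. Therefore the pairwise ranking loss $\hat{\mathcal{L}}_{BPR}$, and by the same argument any pairwise variant whose sampling is driven by observed interactions without correcting for $\bm{O}$, is biased.
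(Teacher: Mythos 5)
Your argument is correct and turns on the same pivot as the paper's proof: Equation \ref{eq:2} factors the observation probability into relevance times exposure, so any procedure whose converged scores order items by $P(\bm{S}_{u,i}=1)$ rather than by $P(\bm{R}_{u,i}=1)$ violates the unbiasedness definition (Equation \ref{eq:12}) whenever $\bm{O}$ is non-uniform. The difference lies in how each of you establishes that BPR's converged scores do order items by $P(\bm{S}_{u,i}=1)$. The paper simply rewrites the ranking condition $\hat{s}_{u,i}-\hat{s}_{u,j}>0$ as $\ln P(\bm{R}_{u,i}=1)+\ln\bm{O}(i)-\ln P(\bm{R}_{u,j}=1)-\ln\bm{O}(j)>0$, which implicitly identifies the learned score with $\ln P(\bm{S}_{u,i}=1)$ and leaves that identification unjustified. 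You instead pass to the expectation of $\hat{\mathcal{L}}_{BPR}$ over the data-generating process, so that the pair $(i,j)$ receives weight $P(\bm{S}_{u,i}=1)\bigl(1-P(\bm{S}_{u,j}=1)\bigr)$, and then characterize the population minimizer by a swap argument. That swap does go through: if $P(\bm{S}_{u,i}=1)>P(\bm{S}_{u,j}=1)$ but $\hat{s}_{u,i}<\hat{s}_{u,j}$, exchanging the two scores strictly decreases the $(i,j)$ term and every group of terms involving a third item $k$, by monotonicity of $\ln\sigma$, and it needs no convexity assumption --- only that the scoring model is expressive enough to realize the exchange. One small point to tighten: the swap yields only the weak inequality $\hat{s}_{u,i}\geq\hat{s}_{u,j}$ at the minimizer, but that already negates the strict equivalence required by Equation \ref{eq:12}, so the conclusion stands. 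Net, your proof supplies the minimizer characterization the paper glosses over, while the source of the bias --- the uncancelled $\bm{O}(i)$ versus $\bm{O}(j)$ terms --- is identical in both.
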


\begin{proof}
    
Pairwise ranking encourages the predicted scores of positive pairs for every user to be higher than negative ones, which can be represented as: 
\begin{equation}
\begin{small}
\begin{aligned} 
\hat{s}_{u,i} - \hat{s}_{u,j} > 0.
\end{aligned}
\label{eq:15}
\end{small}
\end{equation}
According to the Equation \ref{eq:2}, we have:
\begin{equation}
\begin{small}
\begin{aligned} 
\ln P\left(\bm{S}_{u,i} = 1\right) = \ln P\left(\bm{R}_{u,i} = 1\right) + \ln \bm{O}(i).
\end{aligned}
\label{eq:16}
\end{small}
\end{equation}
By combining these two equations, we can rewrite the ranking:
\begin{equation}
\begin{small}
% \resizebox{0.85\hsize}{!}{$
\begin{aligned}
\ln P\left(\bm{R}_{u,i}\!=\! 1\right) \!+\! \ln \bm{O}(i)\! -\! \left[\ln P\left(\bm{R}_{u,j} \!=\! 1\right)\! +\! \ln \bm{O}(j) \right] \!>\! 0.\\
\end{aligned}%$}
\label{eq:17}
\end{small}
\end{equation}
In contrast to the definition, Equation \ref{eq:13} does not serve the purpose of reflecting the true relevance score. This is because if the item i is more popular, as shown in Equation \ref{eq:7}, 
 $\textbf{O}(i)$ may be much higher than $\textbf{O}(j)$. Therefore, the pairwise wanking methods are biased, models optimized with them will produce biased recommendations.
\end{proof}

\section{Proposed Method: \textit{DPR} and \textit{UFN}}
In this section, we propose a new pairwise ranking method called DPR and theoretically demonstrate its unbiasedness. Then we introduce a plugin named Universal Anti-False Negative, called UFN, to alleviate the false negative problem in pairwise ranking.
\subsection{Dynamic personalized ranking}
We begin by presenting the construction of DPR:
\begin{equation}
\begin{small}
\begin{aligned} 
\mathcal{L}_{DPR} &= - \frac{1}{|\mathcal{D}_{pair}|} \sum_{u,i,j \in \mathcal{D}_{pair}} \ln \sigma \left[\frac{\hat{s}_{u,i}}{\gamma_{i}} - \frac{\hat{s}_{u,j}}{\gamma_{j}}\right],\\
\gamma_{i} &= \left(1 + \sum_{u = 1}^{M}P\left(\bm{S}_{u,i}= 1\right)\right)^{\alpha}.
\end{aligned}
\label{eq:18}
\end{small}
\end{equation}
where ${\gamma_{i}}$ is the exposure probability of item i, which can be used as a proxy for the probability of exposure according to Equation \ref{eq:11}; $\alpha$ is a hyper-parameter that reflects the length of the feedback loops suffered by the data, a higher value of $\alpha$ means longer loops. When $\alpha = 0$, DPR is equivalent to BPR because there are no feedback loops on the data, and the presence of the exposure mechanism does not contribute to bias. The value of the $\alpha$ depends on the training data.
\subsection{Variance control of DPR}
Unbiased pairwise methods commonly use inverse propensity scores of items as weights, such as UBPR and PDA, and lower accuracy in computing the propensity scores might lead to higher variance and sub-optimal recommendations, especially for the tail items with low exposure probability. To alleviate this problem, existing methods \cite{mfdu,ref4} apply non-negative estimators that clip large negative values for practical variance control but introduce additional bias.\par
Unlike existing methods, DPR uses the lower bound of the exposure mechanism as the weight, and once $\alpha$ is determined, the accuracy of the lower bound is guaranteed. Equation \ref{eq:17} shows that $\gamma_{i}$ is bounded at $[1,4]$ and $\frac{1}{\gamma_{i}}$ is therefore bounded at $[0.25,1]$. Due to the large number of items, the final value will tilt towards the left side of the range, so that the variance of DPR is acceptable even for the least popular items. In summary, the variance of DPR is much lower than the existing re-weighting methods without any additional operations.
\subsection{Unbias of DPR}
Similar to BPR, DPR also encourages the score of positive pairs to be higher than negative ones. DPR can also be reformulated as:
\begin{equation}
\begin{small}
\begin{aligned} 
\frac{\hat{s}_{u,i}}{\gamma_{i}} - \frac{\hat{s}_{u,j}}{\gamma_{j}} > 0.\\
\end{aligned}
\label{eq:19}
\end{small}
\end{equation}

\begin{proposition}
DPR is unbiased under the assumption of Equation \ref{eq:5}.
\end{proposition}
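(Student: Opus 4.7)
The plan is to verify Definition 1 for $\mathcal{L}_{DPR}$ by showing that, at convergence, the ranking induced by $\hat{s}_{u,i}/\gamma_i$ coincides with the ranking of the true relevance scores $r_{u,i}$. The structure will mirror the Proposition 1 argument but in reverse: where BPR left an uncanceled $\ln\bm{O}(i)-\ln\bm{O}(j)$ residual, DPR's division by $\gamma_i$ must exactly wipe that residual out in the ranking sense.

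First I would observe that, because DPR is a log-sigmoid pairwise loss on the transformed scores $\hat{s}_{u,i}/\gamma_i$, its convergence condition is precisely $\hat{s}_{u,i}/\gamma_i > \hat{s}_{u,j}/\gamma_j$ whenever $(u,i,j)\in\mathcal{D}_{pair}$, exactly as written in Equation 19. Next I would interpret $\hat{s}_{u,i}$ as the estimator of the observed interaction probability $P(\bm{S}_{u,i}=1)$, which by Equation 2 factors as $P(\bm{R}_{u,i}=1)\cdot\bm{O}^{t}(i)$. Substituting and taking logs (or, equivalently, working with the multiplicative form) turns the convergence inequality into
\begin{equation*}
\frac{P(\bm{R}_{u,i}=1)\cdot \bm{O}^{t}(i)}{\gamma_i} > \frac{P(\bm{R}_{u,j}=1)\cdot \bm{O}^{t}(j)}{\gamma_j}.
\end{equation*}
The key step is then to invoke Equation 11, which was derived under the MNAR iterative assumption of Equation 5: $\bm{O}^{t}(i)$ admits $\gamma_i=(1+\sum_u P(\bm{S}_{u,i}=1))^{\alpha}$ as its lower bound, and in the regime where that bound is tight the factor $\bm{O}^{t}(i)/\gamma_i$ becomes a per-item constant independent of the ranking. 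Cancelling it on both sides leaves $P(\bm{R}_{u,i}=1)>P(\bm{R}_{u,j}=1)$, which is exactly the true-relevance ordering $r_{u,i}>r_{u,j}$ demanded by Definition 1.

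To complete the $\Leftrightarrow$ I would run the same chain backward: if $r_{u,i}>r_{u,j}$, the same cancellation shows the transformed predicted scores must satisfy the DPR convergence inequality, so no other limiting configuration is consistent with the loss being minimized. Finally, I would note that the tuning of $\alpha$ absorbs the loop length $T$ so that the $\gamma$-weight tracks the true exposure across iterations; this is what makes the cancellation above hold \emph{under the assumption of Equation 5} rather than only at $T=0$.

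The main obstacle will be the tightness of Equation 11: it is an inequality, not an equality, so strictly speaking $\bm{O}^{t}(i)/\gamma_i$ is only approximately constant across items. I would handle this by either (i) arguing that $\alpha$ is chosen so the bound is attained in the ranking sense (i.e.\ the ratio is monotone in $\sum_u P(\bm{S}_{u,i}=1)$ in the same way for all $i$, so order is preserved after division), or (ii) framing the result as unbiasedness up to the slack in Equation 10, which vanishes as the per-item exposure aggregates accumulate. Either route reduces the statement to a clean order-preservation argument rather than a variance calculation, consistent with the order-theoretic Definition 1 the authors adopt.
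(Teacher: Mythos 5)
Your proposal is correct and takes essentially the same route as the paper's own proof: substitute the decomposition $P(\bm{S}_{u,i}=1)=P(\bm{R}_{u,i}=1)\cdot\bm{O}(i)$ from Equation \ref{eq:2} into the pairwise convergence inequality and cancel the exposure term against $\gamma_i$ under a properly chosen $\alpha$, leaving the true-relevance ordering of Definition 1. Your closing discussion of the tightness of Equation \ref{eq:11} (that $\gamma_i$ is only a lower bound on $\bm{O}^{t}(i)$, so the cancellation needs an order-preservation or bound-attainment argument) is in fact more explicit than the paper, which disposes of the same gap with the single clause ``if a proper parameter $\alpha$ is chosen.''
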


\begin{proof}
According to the assumption Equation \ref{eq:5} and Equation \ref{eq:12}, given the pairs {\small $(u,i,j) \in \mathcal{D}_{pair}$}, we have:
\begin{equation}
\begin{small}
\left\{
\begin{aligned} 
&\ln P(\bm{R}_{u,i} = 1) + \ln \bm{O}(i) - \ln \gamma_{i},\\
&\ln P(\bm{R}_{u,j} = 1) + \ln \bm{O}(j) - \ln \gamma_{j}.
\end{aligned}
\right.
\label{eq:20}
\end{small}
\end{equation}
Substitution for  Equation \ref{eq:16} gives:
\begin{equation}
\begin{small}
\begin{aligned} 
\ln P(\bm{R}_{u,i} = 1) &- \ln P(\bm{R}_{u,j} = 1) > 0 \\
\Leftrightarrow \quad\quad\quad\quad\quad r_{u,i} &- r_{u,j} > 0.\\
\end{aligned}
\label{eq:21}
\end{small}
\end{equation}
If a proper parameter $\alpha$ is chosen, the negative impact of the exposure mechanism is canceled by $\gamma_{i}$ and $\gamma_{j}$. Model prediction scores for user-item pairs are ranked in the same order as the true relevance scores. Thus, the DPR is unbiased.

\end{proof}

\begin{algorithm}[t]
\caption{ Dynamic Personalized Ranking (DPR)}
\label{alg:Framwork}
\begin{algorithmic}[1]
\REQUIRE ~~\\
    The set of observed interaction data $\{S_{u,i}\}$;\\
    Learning rate $\mu$,batch size $B$,regularization parameter $\lambda$;\\
    Exposure parameter $\alpha$,false negative parameter$\beta$;
\ENSURE ~~\\
    learned model parameters;
    \STATE Initialize the model parameters $\textbf{U},\textbf{V}$, and calculate the exposure probability of items ${\gamma_{i}}$ ;
    \STATE Sample negative samples for pairs (u, i) sorted by current score $\textbf{S}_{u,j}$ for composition $\mathcal{D}_{pair}$;
    \REPEAT
    \STATE Sample a size $B$ of mini-batch data from $\mathcal{D}_{pair}$;
    \STATE Compute $\mathcal{L}_{DPR}$;
    \STATE Update model learnable parameters $\textbf{U},\textbf{V}$ ;
    \STATE Resampling the negative samples in the same way;
    \UNTIL convergence
\RETURN $\textbf{U},\textbf{V}$;
\end{algorithmic}
\end{algorithm}
\subsection{False negative samples approach}
The pairwise ranking method usually assumes that unobserved interactions are all negative, which is unrealistic because an item may not be exposed to the user, rather than the user does not prefer it, thus introducing false negative samples into the training process. The existence of false negative samples hurts the performance of the model because it prevents the model from fully tapping into user preferences. Previous studies \cite{ref9,ref10} have shown that both false negative samples and hard negative samples have large scores, but false negative samples have comparatively lower prediction variance. This means that false negative samples maintain a high prediction score in the training process. On the basis of these findings, we introduce a plugin named Universal Anti-False Negative (UFN) to alleviate the false negative problem:
\begin{equation}
\begin{small}
\begin{aligned} 
UFN(\hat{s}_{u,j}) = \left(1 - f(\hat{s}_{u,j})\right)^{\beta},
\end{aligned}
\label{eq:22}
\end{small}
\end{equation}
where $\hat{s}_{u,j}$ is the predicted score of negative items, and $\beta$ is the hyper-parameter to control the strength, with a higher $\beta$ implying a stronger effort to alleviate false negative samples. $f(\hat{s}_{u,j})$ is a mapping function like $sigmod(\cdot)$, for the convenience of the proof, we use $tanh(\cdot)$.  The DPR with UFN can be written as:
\begin{equation}
\begin{small}
\begin{aligned} 
\mathcal{L}_{DPR}\! =\! - \frac{1}{|\mathcal{D}_{pair}|}\! \sum_{{\tiny u,i,j \in \mathcal{D}_{pair}}} \!\ln \sigma\! \left[\frac{\hat{s}_{u,i}}{\gamma_{i}}\! -\! \frac{UFN(\hat{s}_{u,j})}{\gamma_{j}}\hat{s}_{u,j}\right].
\end{aligned}
\label{eq:23}
\end{small}
\end{equation}

\begin{proposition}

UFN can alleviate the false negative problem
\end{proposition}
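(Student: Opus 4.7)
The plan is to formalize \emph{alleviation of the false-negative problem} as showing that UFN strictly attenuates both the loss contribution and the backward gradient of any negative sample whose predicted score is high, since prior work (cited as \cite{ref9,ref10} in the excerpt) establishes that false negatives are precisely the unobserved items that maintain persistently large $\hat{s}_{u,j}$ with low variance across training. I will proceed in two technical steps after committing to this criterion.

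First I would record the monotonicity properties of the factor $UFN(\hat{s}_{u,j}) = (1-\tanh(\hat{s}_{u,j}))^{\beta}$. Since $\tanh$ is strictly increasing with range $(-1,1)$, the weight is a strictly decreasing function of $\hat{s}_{u,j}$ with $UFN(\hat{s}_{u,j})\to 0$ as $\hat{s}_{u,j}\to+\infty$ and $UFN(\hat{s}_{u,j})\to 2^{\beta}$ as $\hat{s}_{u,j}\to-\infty$; raising $\beta$ sharpens this transition. Hence in Equation~\ref{eq:23} the quantity multiplying $\hat{s}_{u,j}/\gamma_{j}$ is selectively small exactly on the high-score regime where false negatives live, while remaining close to $2^{\beta}$ on genuine negatives, so their negative signal is preserved.

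Second, I would translate this pointwise shrinkage into a corresponding shrinkage of the gradient signal. Writing $\phi_{u,i,j}:=\hat{s}_{u,i}/\gamma_{i} - UFN(\hat{s}_{u,j})\hat{s}_{u,j}/\gamma_{j}$, the chain rule gives that the negative branch contributes to $\partial_{\Theta}\mathcal{L}_{DPR}$ an amount proportional to $\sigma(-\phi_{u,i,j})\,\partial_{\hat{s}_{u,j}}\!\bigl[UFN(\hat{s}_{u,j})\hat{s}_{u,j}\bigr]/\gamma_{j}$. A direct product-rule computation yields
\[
\partial_{\hat{s}_{u,j}}\bigl[UFN(\hat{s}_{u,j})\hat{s}_{u,j}\bigr] = (1-\tanh(\hat{s}_{u,j}))^{\beta-1}\Bigl[(1-\tanh(\hat{s}_{u,j})) - \beta\,\hat{s}_{u,j}(1-\tanh^{2}(\hat{s}_{u,j}))\Bigr],
\]
whose magnitude tends to $0$ as $\hat{s}_{u,j}\to+\infty$, strictly below the unperturbed derivative that appears in vanilla DPR. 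The optimizer therefore ceases to push down on scores once they are large enough, rather than continuing to penalize candidate false negatives.

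The main obstacle is the absence of a formal definition of \emph{false negative} in the statement itself; the argument stands or falls on committing to the high-score proxy of \cite{ref9,ref10} at the outset. A secondary subtlety is that the gradient inequality is not uniform over all $\hat{s}_{u,j}$ but only holds beyond a $\beta$-dependent threshold $s^{\ast}(\beta)$; I would therefore phrase the conclusion as alleviation restricted to the regime $\hat{s}_{u,j}\ge s^{\ast}(\beta)$, which coincides with the operational definition of false negatives adopted in step one.
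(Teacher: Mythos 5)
Your proposal is correct and follows essentially the same route as the paper: both identify false negatives with unobserved items that maintain persistently high predicted scores (appealing to the same prior work) and then show that the factor $\left(1-\tanh(\hat{s}_{u,j})\right)^{\beta}$ damps the gradient contribution of exactly those items, so the optimizer stops penalizing them. The only material difference is that you differentiate through the UFN factor with the product rule for general $\beta$ in a model-agnostic way, whereas the paper computes the MF latent-factor gradients at $\beta=1$ and effectively treats UFN as a fixed reweighting term; your version is the more careful bookkeeping (and your caveats about the high-score proxy and the $\beta$-dependent threshold are subtleties the paper's proof glosses over), but the key mechanism is identical.
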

\begin{proof}

For obvious results, the next proof procedure will be based on matrix factorization, which is common in recommendation systems. It is common to use the regularization algorithm in the training process, so DPR with regularization can be rewritten as:
\begin{equation}
\begin{small}
\begin{aligned}
\underset{\Theta}{\arg\min}\! \sum_{{\tiny u,i,j \in \mathcal{D}_{pair}}}\! \left\{\! \lambda\|\Theta\|^2 \!-\! \ln \!\sigma\! \left[\frac{\hat{s}_{u,i}}{\gamma_{i}}\!- \frac{UFN(\hat{s}_{u,j})}{\gamma_{j}}\hat{s}_{u,j}\right] \!\right\}.\nonumber
\end{aligned}
\end{small}
\end{equation}
The gain after a single batch is obtained by SGD is:
\begin{equation}
\begin{small}
\left\{
\begin{aligned}
&\frac{\partial \mathcal{L}}{\partial u} \!= \!\mathbf{p_{j}}\! \left(1 - tanh(\hat{s}_{u,j}) \right)\! -\! \mathbf{p_{i}} \!+\! \left(\lambda\! +\! (1\! -\! tanh(\hat{s}_{u,j})^2 )\|\mathbf{p_{j}}\|^2\right)\mathbf{p_{u}},\\
\\
&\frac{\partial \mathcal{L}}{\partial j}\! =\! \mathbf{p_{u}} \!\left(1 - tanh(\hat{s}_{u,j}) \right)\! +\! \left(\lambda \!+\! (1\! -\! tanh(\hat{s}_{u,j})^2 )\|\mathbf{p_{u}}\|^2\right)\mathbf{p_{j}}.\\
\end{aligned}
\right.
\label{eq:24}
\end{small}
\end{equation}
where {\small $\mathbf{p_{u}},\mathbf{p_{i}}$} and {\small$\mathbf{p_{j}}$} is the latent factor of the model parameters {\small$\bm{U},\bm{V}$}. As we can see from Equation \ref{eq:21} when a higher prediction score of false negative sample j contributes less to the user u, which means that false negative samples do not damage the model construction of user preferences. In addition, the plugin helps false negative samples to stable their preferences, which proves its effectiveness and improves the quality of recommendations.
\end{proof}
\subsection{Comparison with IPS-based methods}
IPS and DPR both focus on exposure mechanisms to achieve unbiased recommendations, but with different views of the exposure mechanism and thus provide different solutions. Here we briefly explain the difference between DPR and IPS in modeling the exposure mechanism and the superiority of our novel design. UBPR is the application of IPS in pairwise ranking:
\begin{equation}
\begin{small}
\begin{aligned}
\frac{1}{\|\mathcal{D}_{pair}\|} \sum_{u,i,j \in \mathcal{D}_{pair}} \frac{\bm{S}_{u,i}}{\theta_{u,i}}\left(1 - \frac{\bm{S}_{u,i}}{\theta_{u,i}}\right) \ln\sigma(\hat{s}_{u,i}-\hat{s}_{u,j}\big). \nonumber
\end{aligned}
\end{small}
\end{equation}

The problems of IPS can be summarized as follows:\par
\begin{itemize}
\item IPS believes that the exposure mechanism is constant, but the fact is that it changes over loops.\par
\item IPS substitutes propensity score $\theta_{u,i}$ for the exposure mechanism, which is an oversimplification of the exposure mechanism. In addition, whether the re-weight loss is equal to the ideal expectation as the condition for unbiased or not will suffer from high variance if the propensity score is incorrect.\par
\end{itemize}
\par
Our DPR alleviates the following problems:\par
\begin{itemize}
\item We conduct an in-depth analysis of the exposure mechanism to find its key factors over time, which are used in the DPR to counteract the effects of the exposure mechanism.\par
\item DPR considers changes in the exposure mechanism over time and uses a parameter $\alpha$ to improve its generalization across different datasets.\par
\item Compared to using the ideal expectation to achieve unbias in IPS, DPR focuses on discovering the true performance of users and theoretical proof of its validity.\par
\item IPS-based methods need to compute the propensity score of items and thus suffer from high variance problems. DPR uses a lower bound on the exposure probability and thus has a stable performance compared to IPS-based methods.
\end{itemize}
\section{Experiments}
In this section, we first describe the experimental setup and then demonstrate the effectiveness of the proposed DPR and UFN.
\subsection{Experimental setup}
\subsubsection{Datasets.}Our experiments are conducted on six real-world datasets from three datasets have full observed data Yahoo! R3\footnote{\href{https://webscope.sandbox.yahoo.com/}{https://webscope.sandbox.yahoo.com/}}, Coat\footnote{\href{https://www.cs.cornell.edu/ schnabts/mnar/}{https://www.cs.cornell.edu/ schnabts/mnar/}} and KuaiRec\footnote{\href{https://chongminggao.github.io/KuaiRec/}{https://chongminggao.github.io/KuaiRec/}} \cite{kuai}, and three common dataset Movielens 100K\footnote{\href{https://grouplens.org/datasets/movielens/}{https://grouplens.org/datasets/movielens/}} \cite{ml-100k}, Movielens 1M and Lastfm\footnote{\href{https://grouplens.org/datasets/hetrec-2011/}{https://grouplens.org/datasets/hetrec-2011/}} \cite{lastfm}. Following prior works, ratings in Yahoo! R3, Coat, Movielens 100K, Movielens 1M and  Lastfm are binarized by setting ratings under 4 to 0 and the rest to 1. For KuaiRec, We define the interest of a user-item interaction as the watching ratio, which is the ratio of watching time length to the total length, and also binarize it by setting the ratio under 2 to 0, and the rest to 1. Meanwhile, we apply leave-one-out to split training, validation, and test data for all datasets. The properties of datasets are summarized in Table \ref{tab:3}.
\begin{table}[tp]
\renewcommand\arraystretch{1.25}
    \centering
    \caption{Datasets Statistics.}
    \label{tab:3}
    \begin{tabular}[b]{crrrrc}
    \bottomrule
    Dataset & \#Users & \#Items & \#Interactions & \textit{Sparsity} \\
    \bottomrule
    \midrule
    Coat & 290 & 300 & 6,960 & 92.0\%\\
    Yahoo! R3 & 15,400 & 1,000 & 311,704 &97.9\%\\
    KuaiRec & 7,176 & 10,729 & 12,530,806 & 83.7\%\\
    Movielens 100K & 943 & 1,682 & 100,000 & 93.6\%\\
    Lastfm & 1,891 & 12,522 & 149,959 & 99.2\%\\
    Movielens 1M & 6,039 & 3,705 & 802,553 & 95.9\%\\
    \bottomrule
    \end{tabular}
\end{table}
\subsubsection{Simulation dataset} To evaluate whether the effect of the feedback loops is indeed present, we conduct experiments on simulated datasets. The simulation dataset contains 200 users and 500 items, and we generate the initial dataset by randomly exposing 20 items per user and 20 users per item. After initialization, we randomly select k = 2 of the top 10 items recommended by the model in the current loop as the new user-item interaction. We run a total of 50 loops for each method and then summarize the results. The initial data static are shown in Figure \ref{fig:4}.
\begin{figure}
    \centering
    \begin{minipage}[t]{0.48\linewidth}
    \centering
    \centerline{\includegraphics[width = \linewidth]{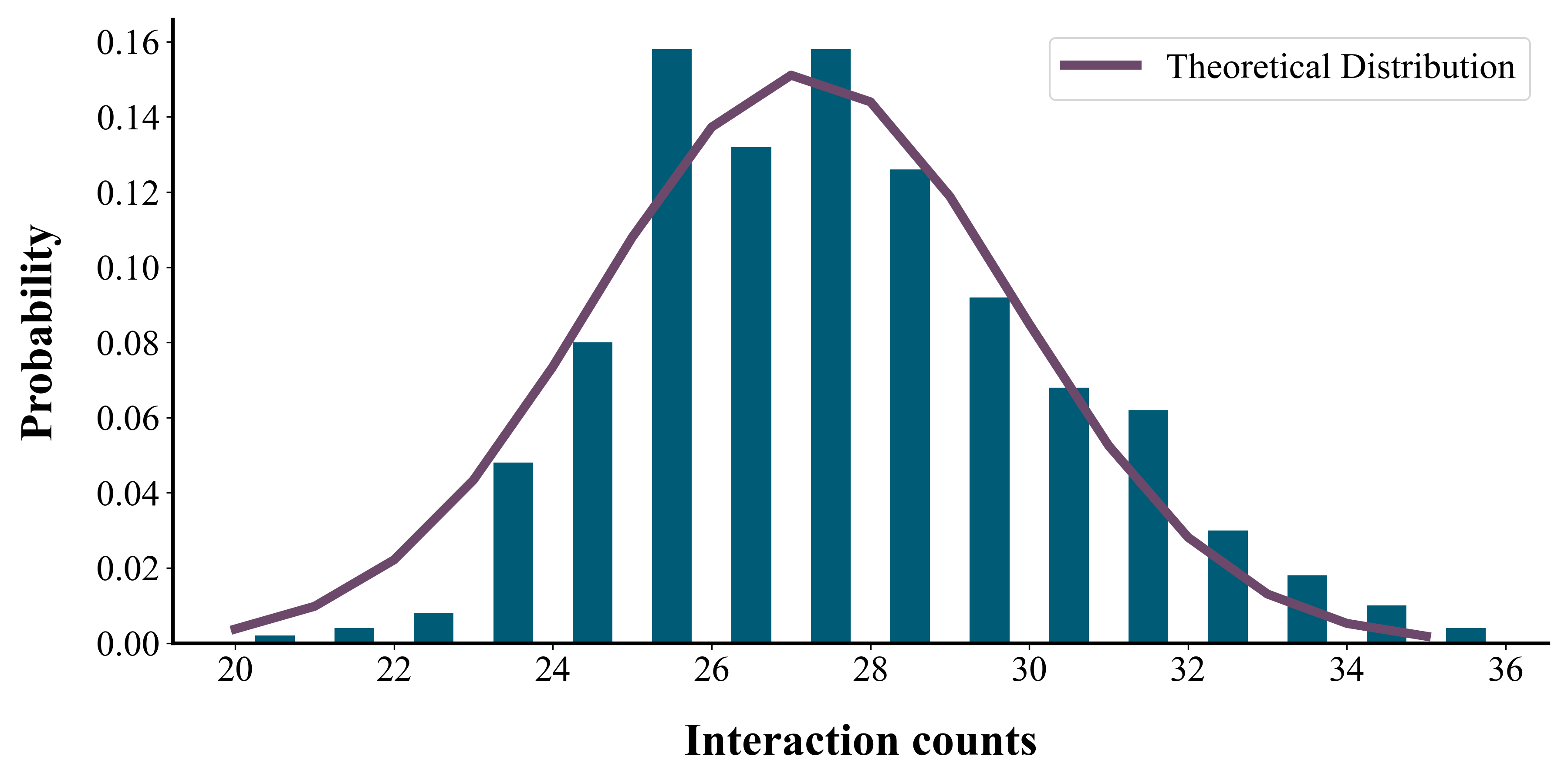}}
    \centerline{(a)}
    \end{minipage}
    \hfill
    \begin{minipage}[t]{0.48\linewidth}
    \centering
    \centerline{\includegraphics[width = \linewidth]{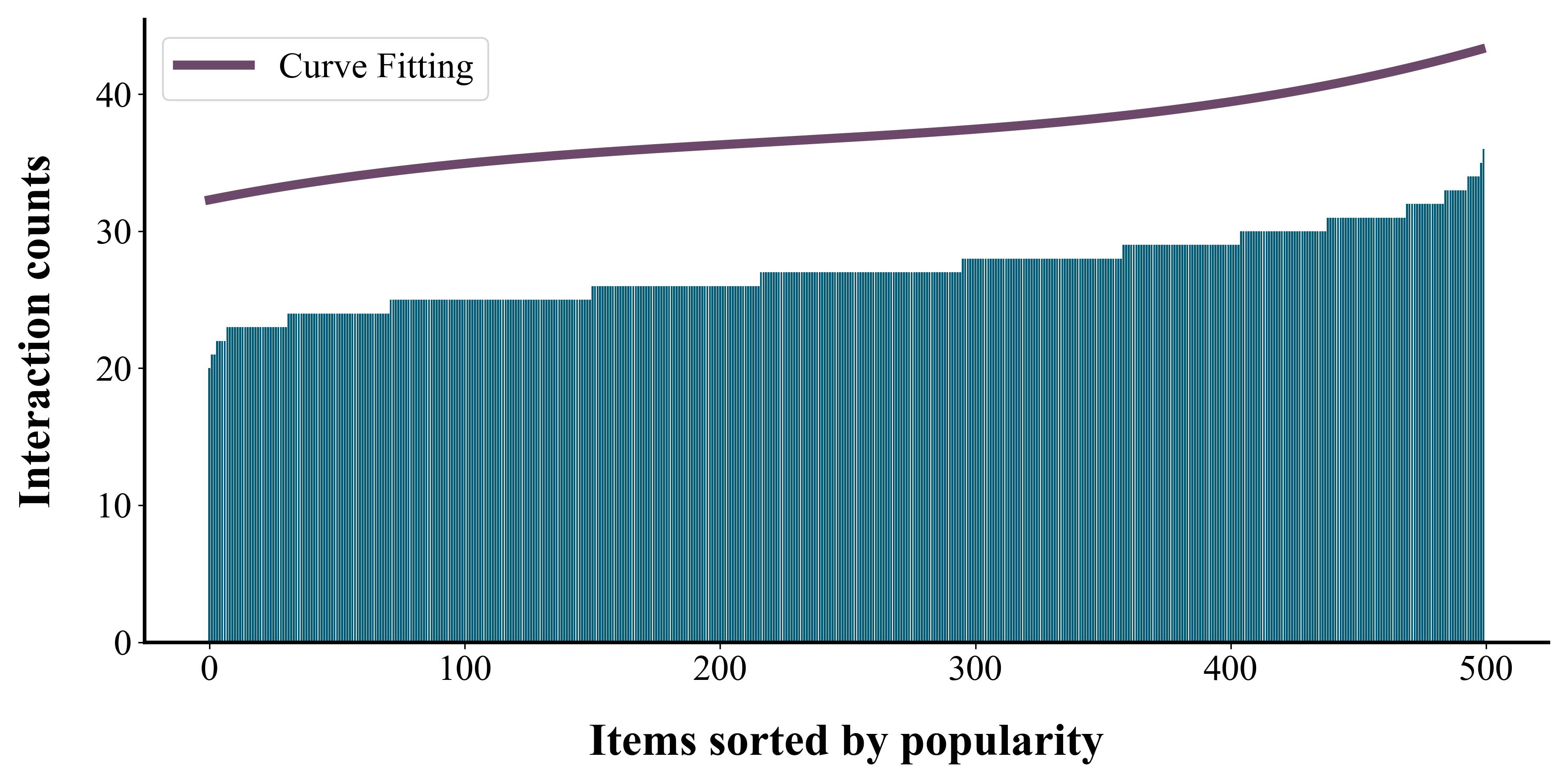}}
    \centerline{(b)}
    \centering
    \end{minipage}
    \caption{Simulation Dataset Statistic}
    \label{fig:4}
\end{figure}
\subsubsection{Evaluation.} 
Note that the sampling-based evaluation approach does not truly reflect the ability of the model to capture the true preferences of users, simply fitting the data may also have better performance. To this end, we report the all-ranking performance w.r.t. two widely used metrics: Recall, and NDCG cut at K = 5. In order to measure the popularity of recommended items, we use average popularity rank (ARP) and tail percentage (TAP) as the other two indicators, and the formula is:
\begin{equation}
\begin{aligned}
\mathrm{Average Popularity Rank (ARP)@K}=\frac{1}{|U|} \sum_{u \in U } \frac{\sum_{i \in R_{u}} \phi(i)}{|R_{u}|} \nonumber
\end{aligned}
\end{equation}
\begin{equation}
\begin{aligned}
\mathrm {Tail Percentage (TAP)@K}=\frac{1}{|U|} \sum_{u \in U} \frac{\sum_{i \in R_{u}} {\delta(i \in T)}}{|R_{u}|} \nonumber
\end{aligned}
\end{equation}
where $\phi(i)$ is the ranking of times item i has been rated in the training set, $\delta(i \in T)$ is the tail item set, we set the last 80\% of the total rank of the popularity of all items as the tail item. $\bm{R}_{u}$ is the recommended list of items for user $u$. We apply sampling-based evaluation on hyperparameter experiments, which can better reflect the performance of the model in terms of modeling accuracy. Specifically, we sample 99 negative samples for each user-item interaction in the test set.
\subsubsection{Baselines}
We also use BPR and six other debiasing methods for comparison, including four pairwise methods and two pointwise methods.
\begin{itemize}
\item \textbf{BPR} \cite{ref8} is the basic pairwise ranking algorithm that is widely used in training. 
\item \textbf{UBPR} \cite{ref4} is an IPS-based method that modifies pairwise loss. 
\item \textbf{EBPR} \cite{ref12} uses item-based explanations to enhance the interpretability of the pairwise loss. 
\item \textbf{PDA} \cite{pda} is an method focus on eliminate popularity problem.
\item \textbf{UPL} \cite{upl} is an unbiased method that focused on the high-variance problem of IPS-based methods.
\item \textbf{Rel-MF} \cite{ref15} is an unbiased method of binary cross-entropy loss for solving missing-not-at-random problems.
\item \textbf{CPR} \cite{ref7} constructs special training samples to achieve debiasing. 
\item \textbf{MFDU} \cite{mfdu} simultaneously eliminates the bias of clicked and unclicked data.
\end{itemize}
For a fair comparison, we apply the above methods to the widely used model, Matrix Factorization (MF) \cite{ref26} and LightGCN \cite{ref27}.
\subsubsection{Implementation details} We implement DPR and baselines in PyTorch. We take the embedding dimension as 64, and all models are trained with the Adam optimizer via early stopping. We set the learning rate to 1e-3 and the $l_2$-regularization weight to 1e-6. We randomly sample 10 negative items for each positive item for pairwise ranking methods. For DPR, we tune the hyper-parameter $\alpha$ in the range of $[0,6]]$ for different data sets, with the default setting $\beta = 1$. We implement DPR$^-$ as DPR without the use of the UFN to measure the effectiveness of the proposed plugin. To detect significant differences in DPR and the best baseline on each data set, we repeat their experiments five times by varying the random seeds, we choose the average performance to report. All ranking metrics are computed at a cutoff K = 5 for the Top-5 recommendation\cite{xu02}.\par
\begin{figure}[tp]
    \begin{minipage}{0.9\linewidth}
        \centering
        \centerline{\includegraphics[width= 0.5\linewidth]{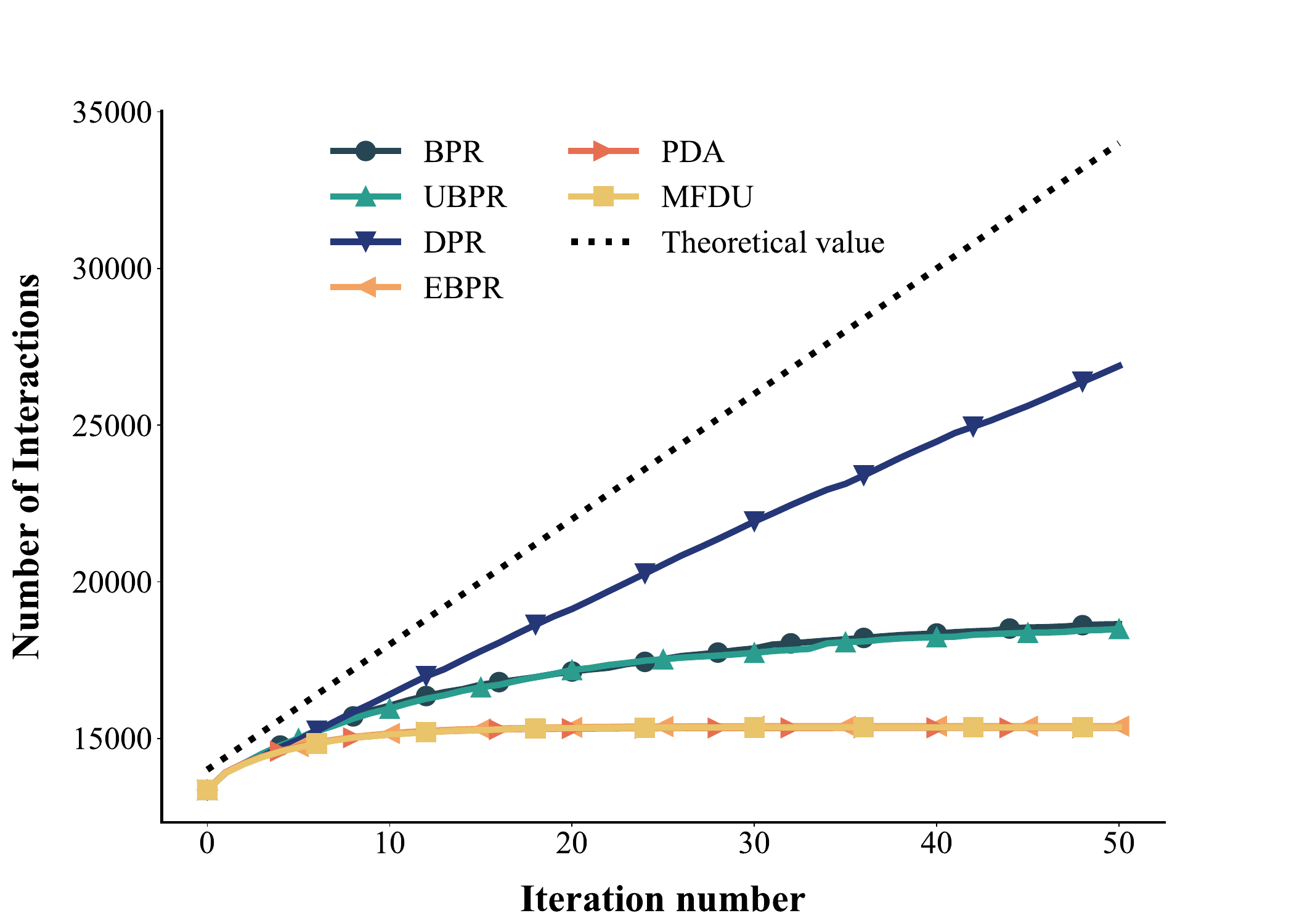}}
        \centerline{(a)}
        \centering
        \centerline{
        \centering
        \begin{minipage}{0.48\linewidth}
        \centering
        \centerline{\includegraphics[width=\linewidth]{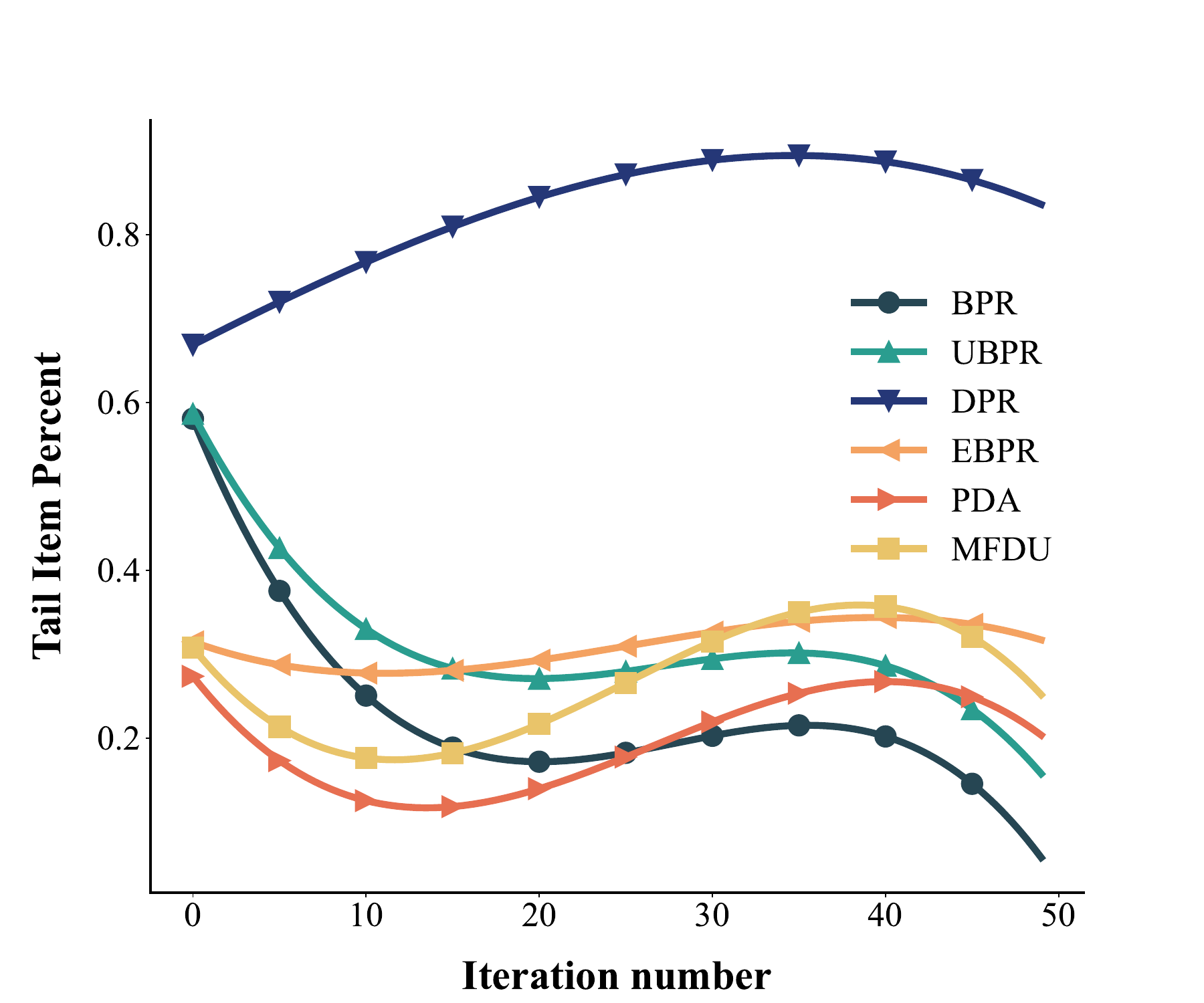}}
        \centerline{(b)}
        \end{minipage}
        
        \begin{minipage}{0.48\linewidth}
        \centerline{\includegraphics[width=\linewidth]{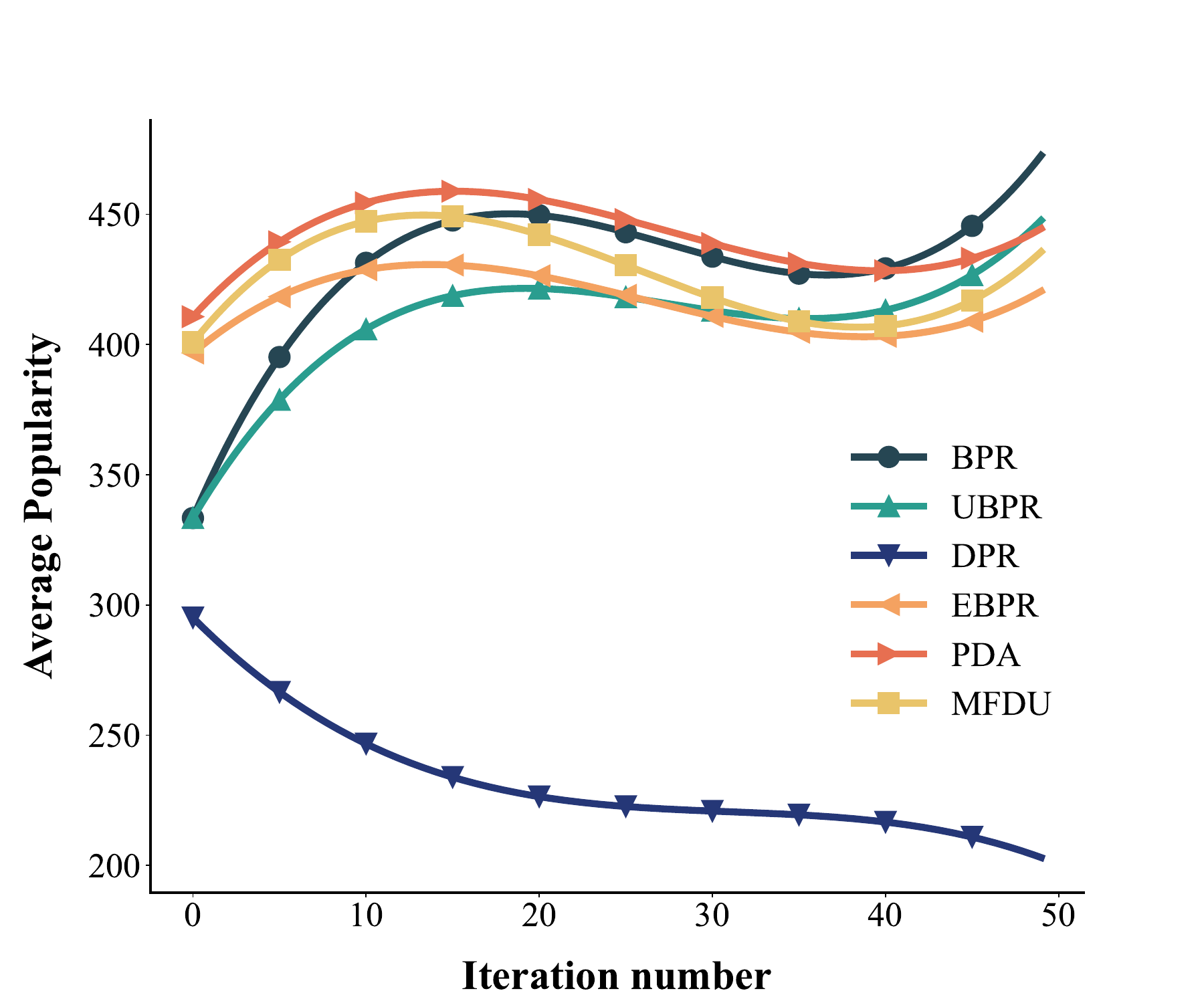}}
        \centerline{(c)}
        \end{minipage}
        }
        %\caption{fig1}
    \end{minipage}%
    \caption{Simulation experiments results applying our framework on simulation dataset with MF. The recommendation performance is evaluated as a ranking task at K = 30. Higher Number of Interactions and Tail Item Percent mean better model performance; Average Popularity Rank the lower, the better.}
    \label{fig:5}
\end{figure}
Following \cite{mfdu,upl}, for MFDU, we separately computed the propensity scores for positive and negative items respectively as:
\begin{equation}
\begin{aligned}
    \theta_{*,i}^{+} = \left( \frac{n_i}{max_{i \in I}n_i}\right)^{0.5}, \nonumber \\
    \theta_{*,i}^{-} = \left(1 -  \frac{n_i}{max_{i \in I}n_i}\right)^{0.5}. \nonumber
\end{aligned}
\end{equation}
where $n_i$ means the number of interactions to the item i by all users. While for RelMF, UBPR, UPL, and PDA, we used $\theta_{*, i}^{+}$ as the propensity scores for all of the items. For DPR, we calculate $P\left(\bm{S}_{u,i}= 1\right)$ for all items as follows:
\begin{equation}
    P\left(\bm{S}_{u,i}= 1\right) = \frac{n_i}{max_{i \in I}n_i}. \nonumber
\end{equation}

\subsection{Simulation experiments}
We conduct experiments with all baselines and the proposed method DPR on simulated datasets, and the results are shown in Figure \ref{fig:5}. As can be seen from Figure \ref{fig:5} (a), as the feedback loops progress, there are almost no new user-item interactions in baselines after 30 loops. This means that users are surrounded by popular items and have no opportunity to come into contact with unpopular items, a phenomenon often referred to as filter bubbles. The user-item interaction growth rate of baselines is significantly below the theoretical value, and the baseline with better performance on debiasing may even be lower than the BPR, largely due to they seek to eliminate the bias in the current timestamp and ignore the impact of the feedback loops. DPR achieves a significant rate increase compared to other baselines, the existence of a gap between the growth curve of DPR and the theoretical value is justified by the fact that items cannot be preferred by all users and thus we cannot achieve a perfect curve like the theoretical value. The performance of DPR on the growth rate of user-item interactions validates that DPR can help the model break feedback loops and delay the arrival of filter bubbles.\par
DPR helps the model escape the feedback loop and perform better than the baselines in terms of debiasing. As can be seen from Figure \ref{fig:5} (b) and (a), DPR has a significantly high value in the percentage of tail items, while the baseline reaches a lower value. As the model iterates, the baselines fail to capture the variability of exposure mechanisms, such that items with high popularity have high exposure probabilities and deteriorate as the iteration progresses, thus achieving lower performance in terms of tail items. Debiasing loss functions such as UBPR, MFDU and EBPR perform better than BPR, demonstrating that debiasing methods can alleviate the bias accumulation in feedback loops, but the performance is not measurable in the long term as they formulate the problem in a single time step.

\begin{table}[tp]
\renewcommand\arraystretch{1.25}
    \caption{Overall performance of applying our framework on three full observed datasets with two recommendation models. The best result in each column is in bold; The best baseline result in each line is underlined. The recommendation performance is evaluated as a ranking task. Higher Recall and NDCG mean better model performance; ARP is a measure of the influence of feedback loops on the model, the lower, the better.}
    \label{tab:4}
\resizebox{\linewidth}{!}{
\begin{tabular}{@{}cccccccccccc@{}}
    % \centering
\toprule
                          &                            &                            & \multicolumn{3}{c}{Coat}                                                                                                 & \multicolumn{3}{c}{Yahoo}                                                                           & \multicolumn{3}{c}{Kuairec}                                                           \\ \midrule
                          &                            & \multicolumn{1}{c|}{}      & Recall@5                              & NDCG@5                                & \multicolumn{1}{c|}{ARP@5}               & \multicolumn{1}{c}{Recall@5} & \multicolumn{1}{c}{NDCG@5} & \multicolumn{1}{c|}{ARP@5}              & \multicolumn{1}{c}{Recall@5} & \multicolumn{1}{c}{NDCG@5} & \multicolumn{1}{c}{ARP@5} \\ \midrule
\multirow{7}{*}{MF}       & \multirow{5}{*}{Pairwise}  & \multicolumn{1}{c|}{BPR}   & 0.01702                               & 0.05498                               & \multicolumn{1}{c|}{210.40965}           & 0.02313                      & 0.01691                    & \multicolumn{1}{l|}{993.31318}          & 0.05655                      & 0.66579                    & 10607.80000               \\
                          &                            & \multicolumn{1}{c|}{UBPR}  & 0.01681                               & 0.05493                               & \multicolumn{1}{c|}{{\ul 189.01965}}     & {\ul 0.02637}                & 0.01733                    & \multicolumn{1}{l|}{970.32435}          & 0.05653                      & 0.66455                    & 10607.49270               \\
                          &                            & \multicolumn{1}{c|}{EBPR}  & 0.01681                               & 0.05422                               & \multicolumn{1}{c|}{255.27896}           & 0.02527                      & 0.01680                    & \multicolumn{1}{l|}{994.95318}          & 0.05655                      & 0.66374                    & 10607.80000               \\
                          &                            & \multicolumn{1}{c|}{PDA}   & 0.01735                               & 0.05651                               & \multicolumn{1}{c|}{255.15689}           & 0.02215                      & 0.01580                    & \multicolumn{1}{l|}{995.07494}          & 0.05655                      & 0.66579                    & 10607.80000               \\
                          &                            & \multicolumn{1}{c|}{UPL}   & 0.01757                               & 0.05690                               & \multicolumn{1}{c|}{265.37344}           & 0.02533                      & 0.01741                    & \multicolumn{1}{l|}{993.93612}          & 0.05129                      & 0.62390                    & 10607.50000               \\ \cmidrule(l){2-12} 
                          & \multirow{2}{*}{Pointwise} & \multicolumn{1}{c|}{RelMF} & 0.01832                               & {\ul 0.05944}                         & \multicolumn{1}{c|}{194.67310}           & 0.02512                      & {\ul 0.01861}              & \multicolumn{1}{l|}{987.01681}          & 0.05655                      & 0.66571                    & {\ul 10607.49135}         \\
                          &                            & \multicolumn{1}{c|}{MFDU}  & {\ul 0.01843}                         & 0.05867                               & \multicolumn{1}{c|}{284.49655}           & 0.02256                      & 0.01546                    & \multicolumn{1}{l|}{{\ul 993.05911}}    & {\ul 0.05655}                & {\ul 0.66579}              & 10607.80000               \\ \cmidrule(l){2-12} 
                          & \multirow{2}{*}{Ours}                       & \multicolumn{1}{c|}{$\rm DPR^{-}$}   & 0.01921                      & 0.06132                      & \multicolumn{1}{c|}{\textbf{174.64896}}  & 0.02760             & 0.01872           & \multicolumn{1}{l|}{980.11035} & 0.05695            & 0.66628          & 10504.79300      \\
                          &                                             & \multicolumn{1}{c|}{DPR}   & \textbf{0.02035}                      & \textbf{0.06199}                      & \multicolumn{1}{c|}{178.32158}  & \textbf{0.02831}             & \textbf{0.01901}           & \multicolumn{1}{l|}{\textbf{972.92581}} & \textbf{0.05732}            & \textbf{0.67028}          & \textbf{10454.62584}      \\ \midrule
\multirow{8}{*}{LightGCN} & \multirow{5}{*}{Pairwise}  & \multicolumn{1}{c|}{BPR}   & \multicolumn{1}{l}{0.01713}           & \multicolumn{1}{l}{0.05421}           & \multicolumn{1}{l|}{201.79896}           & 0.02158                      & 0.01450                    & \multicolumn{1}{l|}{902.34749}          & 0.00291                      & 0.02619                    & 10288.08072               \\
                          &                            & \multicolumn{1}{c|}{UBPR}  & \multicolumn{1}{l}{0.01713}           & \multicolumn{1}{l}{0.05369}           & \multicolumn{1}{l|}{198.06517}           & 0.01535                      & 0.01005                    & \multicolumn{1}{l|}{{\ul 798.96615}}    & 0.00282                      & 0.04705                    & 9607.58150                \\
                          &                            & \multicolumn{1}{c|}{EBPR}  & \multicolumn{1}{l}{{\ul 0.01767}}     & \multicolumn{1}{l}{{\ul 0.05610}}     & \multicolumn{1}{l|}{201.18551}           & {\ul 0.02470}                & {\ul 0.01896}              & \multicolumn{1}{l|}{958.38517}          & 0.01089                      & 0.10802                    & 10530.25301               \\
                          &                            & \multicolumn{1}{c|}{PDA}   & \multicolumn{1}{l}{0.01670}           & \multicolumn{1}{l}{0.05366}           & \multicolumn{1}{l|}{195.31827}           & 0.01892                      & 0.01400                    & \multicolumn{1}{l|}{936.49770}          & 0.00768                      & 0.05380                    & 10507.58242               \\
                          &                            & \multicolumn{1}{c|}{UPL}   & \multicolumn{1}{l}{0.01670}           & \multicolumn{1}{l}{0.05256}           & \multicolumn{1}{l|}{203.44310}           & 0.02156                      & 0.01455                    & \multicolumn{1}{l|}{853.06432}          & 0.00599                      & 0.10106                    & 9840.43940                \\ \cmidrule(l){2-12} 
                          & \multirow{2}{*}{Pointwise} & \multicolumn{1}{c|}{RelMF} & \multicolumn{1}{l}{0.01627}           & \multicolumn{1}{l}{0.05211}           & \multicolumn{1}{l|}{209.18172}           & 0.02021                      & 0.01470                    & \multicolumn{1}{l|}{854.45460}          & 0.00379                      & 0.05228                    & {\ul \textbf{9342.29305}} \\
                          &                            & \multicolumn{1}{c|}{MFDU}  & \multicolumn{1}{l}{0.01573}           & \multicolumn{1}{l}{0.05122}           & \multicolumn{1}{l|}{{\ul 196.58034}}     & 0.02232                      & 0.01583                    & \multicolumn{1}{l|}{956.23090}          & {\ul 0.01152}                & {\ul 0.14999}              & 10564.60496               \\ \cmidrule(l){2-12} 
                          & \multirow{2}{*}{Ours}                       & \multicolumn{1}{c|}{$\rm DPR^{-}$}   & \multicolumn{1}{l}{0.018642} & \multicolumn{1}{l}{0.060286} & \multicolumn{1}{l|}{194.172414} & 0.02880             & 0.01981           & \multicolumn{1}{l|}{\textbf{764.64297}} & 0.01275             & 0.14256          & 9352.88490                \\
                          &                                             & \multicolumn{1}{c|}{DPR}   & \multicolumn{1}{l}{\textbf{0.018725}} & \multicolumn{1}{l}{\textbf{0.060328}} & \multicolumn{1}{l|}{\textbf{183.25149}} & \textbf{0.02931}             & \textbf{0.01995}           & \multicolumn{1}{l|}{779.15264} & \textbf{0.01284}             & \textbf{0.14334}          & 9345.51632                \\ \bottomrule
\end{tabular}}
\end{table}

\begin{table}[tp]
\renewcommand\arraystretch{1.25}
    \caption{Overall performance of applying our framework on three common dataset with two recommendation models. The best result in each column is in bold; The best baseline result in each line is underlined. The recommendation performance is evaluated as a ranking task. Higher Recall and NDCG mean better model performance; ARP is a measure of the influence of feedback loops on the model, the lower, the better.}
    \label{tab:5}
\resizebox{\linewidth}{!}{
\begin{tabular}{@{}cccccccccccc@{}}
\toprule
                          &                            &                            & \multicolumn{3}{c}{Movielens 100K}                                                                   & \multicolumn{3}{c}{Lastfm}                                                                            & \multicolumn{3}{c}{Movielens 1M}                                                      \\ \midrule
                          &                            & \multicolumn{1}{c|}{}      & \multicolumn{1}{c}{Recall@5} & \multicolumn{1}{c}{NDCG@5} & \multicolumn{1}{c|}{ARP@5}               & \multicolumn{1}{c}{Recall@5} & \multicolumn{1}{c}{NDCG@5} & \multicolumn{1}{c|}{ARP@5}                & \multicolumn{1}{c}{Recall@5} & \multicolumn{1}{c}{NDCG@5} & \multicolumn{1}{c}{ARP@5} \\ \midrule
\multirow{7}{*}{MF}       & \multirow{5}{*}{Pairwise}  & \multicolumn{1}{c|}{BPR}   & 0.05987                      & 0.07313                    & \multicolumn{1}{l|}{1660.67730}          & 0.04200                      & 0.06496                    & \multicolumn{1}{l|}{11976.18810}          & 0.03276                      & 0.08673                    & 3674.39914                \\
                          &                            & \multicolumn{1}{c|}{UBPR}  & 0.05857                      & 0.06926                    & \multicolumn{1}{l|}{{\ul 1616.57269}}    & 0.05609                      & 0.05694                    & \multicolumn{1}{l|}{10507.83587}          & 0.03272                      & 0.08740                    & {\ul 3665.27997}          \\
                          &                            & \multicolumn{1}{c|}{EBPR}  & 0.04670                      & 0.05991                    & \multicolumn{1}{l|}{1671.57558}          & 0.04351                      & 0.06931                    & \multicolumn{1}{l|}{11946.89653}          & 0.03333                      & 0.08543                    & 3673.60000                \\
                          &                            & \multicolumn{1}{c|}{PDA}   & 0.05542                      & 0.06912                    & \multicolumn{1}{l|}{1669.68297}          & 0.04110                      & 0.06466                    & \multicolumn{1}{l|}{12148.00000}          & 0.03276                      & 0.08672                    & 3674.39990                \\
                          &                            & \multicolumn{1}{c|}{UPL}   & 0.05723                      & 0.06784                    & \multicolumn{1}{l|}{1670.44657}          & {\ul 0.06393}                & {\ul 0.07097}              & \multicolumn{1}{l|}{{\ul 12083.15468}}    & {\ul 0.03424}                & {\ul 0.08853}              & 3674.32333                \\ \cmidrule(l){2-12} 
                          & \multirow{2}{*}{Pointwise} & \multicolumn{1}{c|}{RelMF} & {\ul 0.06826}                & 0.07714                    & \multicolumn{1}{l|}{1633.62087}          & 0.05395                      & 0.05476                    & \multicolumn{1}{l|}{10165.80245}          & 0.03316                      & 0.08724                    & 3672.15789                \\
                          &                            & \multicolumn{1}{c|}{MFDU}  & 0.06382                      & {\ul 0.07768}              & \multicolumn{1}{l|}{1674.16167}          & 0.04171                      & 0.06502                    & \multicolumn{1}{l|}{12146.58093}          & 0.03431                      & 0.08807                    & 3674.80000                \\ \cmidrule(l){2-12} 
                          & \multirow{2}{*}{Ours}      & \multicolumn{1}{c|}{$DPR^{-}$}   & 0.06941            & 0.07825           & \multicolumn{1}{l|}{1601.21017} & 0.07363             & 0.07407           & \multicolumn{1}{l|}{\textbf{980.11035}}   & 0.03533             & 0.08932           & 3662.80000       \\ 
                          &                            & \multicolumn{1}{c|}{DPR}   & \textbf{0.07057}             & \textbf{0.07883}           & \multicolumn{1}{l|}{\textbf{1599.21017}} & \textbf{0.07421}             & \textbf{0.07487}           & \multicolumn{1}{l|}{983.11035}   & \textbf{0.03594}             & \textbf{0.09017}           & \textbf{3659.7259}       \\ 
                          \midrule
\multirow{8}{*}{LightGCN} & \multirow{5}{*}{Pairwise}  & \multicolumn{1}{c|}{BPR}   & 0.05471                      & 0.06306                    & \multicolumn{1}{l|}{1590.23019}          & 0.16280                      & 0.16939                    & \multicolumn{1}{l|}{11876.99043}          & 0.02577                      & 0.05364                    & 3404.32303                \\
                          &                            & \multicolumn{1}{c|}{UBPR}  & 0.04600                      & 0.05211                    & \multicolumn{1}{l|}{{\ul 1555.11584}}    & 0.17884                      & 0.17780                    & \multicolumn{1}{l|}{11682.94342}          & 0.02582                      & 0.05397                    & 3407.13363                \\
                          &                            & \multicolumn{1}{c|}{EBPR}  & {\ul 0.05875}                & 0.06715                    & \multicolumn{1}{l|}{1618.03265}          & 0.10616                      & 0.13413                    & \multicolumn{1}{l|}{11852.51579}          & 0.01761                      & 0.03615                    & 3199.43516                \\
                          &                            & \multicolumn{1}{c|}{PDA}   & 0.05753                      & {\ul 0.06742}              & \multicolumn{1}{l|}{1611.92944}          & 0.14669                      & 0.15424                    & \multicolumn{1}{l|}{11996.39314}          & {\ul 0.04023}                & {\ul 0.08204}              & 3600.08885                \\
                          &                            & \multicolumn{1}{c|}{UPL}   & 0.04331                      & 0.05042                    & \multicolumn{1}{l|}{1568.27109}          & 0.12003                      & 0.11568                    & \multicolumn{1}{l|}{{\ul 11482.49704}}    & 0.03383                      & 0.07373                    & 3537.03534                \\ \cmidrule(l){2-12} 
                          & \multirow{2}{*}{Pointwise} & \multicolumn{1}{c|}{RelMF} & 0.03492                      & 0.04136                    & \multicolumn{1}{l|}{1525.22344}          & {\ul 0.17973}                & {\ul 0.18246}              & \multicolumn{1}{l|}{11668.15456}          & 0.01350                      & 0.02676                    & {\ul 3015.77742}          \\
                          &                            & \multicolumn{1}{c|}{MFDU}  & 0.04548                      & 0.05341                    & \multicolumn{1}{l|}{1625.08715}          & 0.10001                      & 0.10457                    & \multicolumn{1}{l|}{11941.92372}          & 0.02055                      & 0.04108                    & 3245.67513                \\ \cmidrule(l){2-12} 
                          & \multirow{2}{*}{Ours}                       & \multicolumn{1}{c|}{$DPR^{-}$}   & 0.06137            & 0.06835           & \multicolumn{1}{l|}{1490.20107} & 0.18087             & 0.18774         & \multicolumn{1}{l|}{11277.13549} & 0.04194            & 0.08666           & 3106.81232       \\
                          &                                             & \multicolumn{1}{c|}{DPR}   & \textbf{0.06213}            & \textbf{0.06901}           & \multicolumn{1}{l|}{\textbf{1485.33579}} & \textbf{0.18521}             & \textbf{0.18795}          & \multicolumn{1}{l|}{\textbf{11263.15942}} & \textbf{0.04259}             & \textbf{0.08713}           & \textbf{3099.25963}       \\ \bottomrule
\end{tabular}}
\end{table}

\subsection{Overall performance comparison} In Table \ref{tab:4} and \ref{tab:5}, we present the overall performance of applying baseline methods and our method over six datasets on two recommendation backbones. The baseline models include five pairwise ranking methods and two pointwise methods. For all datasets, the ranking performance of our DPR method outperforms the alternative baseline methods.\par
\begin{itemize}[leftmargin = *]
    \item The performance of the IPS-based methods is not stable and even worse than BPR on some datasets. In a major part, because this type of approach requires relatively accurate inference prior, Ref-MF, for example, is an IPS-based method, and a better estimate of the property score might make it closer to the theoretical maximum. EPR adds additional user-item information to enhance explainability, but introducing additional information does not fully exploit the model and thus performs poorly on three common datasets. The IPS-based methods UPR and Rel-MF show worse performance in KuaiRec, as the IPS-based methods cannot well mitigate the effects of exposure mechanisms in lengthy feedback loops.\par
    \item The baseline model fails to achieve the best performance in both debiasing and ranking performance. Baselines forcing the model to include de-biasing as an additional requirement in the optimization options come at the cost of model performance. Instead of adding a forcing requirement in the training process, DPR increases the adaptability of the model and shows the expected performance of the model as much as possible. Thus DPR has shown promising results on different backbones. Although the de-biasing effect of DPR on the Kuairec dataset is poor compared to RelMF, the accuracy of DPR is significantly higher than that of RelMF. The goal of DPR is to achieve the best possible debiasing performance while maintaining accuracy.\par
    \item The proposed plug-in UFN further improves the performance of DPR. UFN reduces spurious information from false negative samples, which helps the model to avoid falling into sub-optimal traps. Further, DPR equipped with UFN can achieve better performance on most datasets, it is possible that unpopular items have a higher probability of being false negative samples, and UFN can help DPR to be more sensitive to unpopular items, thus achieving lower ARP scores.
\end{itemize}
In summary, DPR achieves better performance than other baseline methods in mitigating the negative effects of the feedback loops and unknown exposure mechanisms without sacrificing recommendation performance in most cases.\par
\begin{table}[tp]
\renewcommand\arraystretch{1.25}
		\centering
		\caption{Comparison between with and without the plugin UFN. BPR$^{^+}$ is the BPR assembled with UFN and DPR$^-$ is the DPR without UFN.\;\%Improv.: percentage of improvement on Recall of BPR$^{^+}$, DPR$^-$, DPR over BPR. }
		\label{tab:6}
  \begin{tabular}{cccccc}
\toprule
\multicolumn{2}{c}{Methods}                 & BPR    & BPR$^{^+}$   & DPR$^-$    & DPR     \\
\midrule
\specialrule{0em}{1pt}{1pt}
\multirow{3}{*}{Coat}           & Recall@20 & 0.2408 & 0.2474 & 0.2558  & \textbf{0.2595}  \\
                                & ARP@20    & 27.21  & 26.37  & 25.75   & \textbf{25.55}   \\
                                & \textbf{\%Improv.} & -      & 2.74\% & 6.23\%  & 7.77\%  \\
\midrule
\multirow{3}{*}{Movielens 100K} & Recall@20 & 0.7125 & 0.7233 & 0.7861  & \textbf{0.797}   \\
                                & ARP@20    & 138.21 & 115.38 & 99.57   & \textbf{95.22}   \\
                                & \textbf{\%Improv.} & -      & 1.52\% & 10.33\% & 11.92\%  \\ \bottomrule
\end{tabular}
\end{table}
\subsection{Effectiveness of the proposed plugin UFN}
We conduct experiments on two datasets to show the effectiveness of the proposed plugin for model performance improvement. As shown in Table \ref{tab:6}, deploying UFN on BPR alone can make a certain improvement in the model performance and decreases the weight of popular items in the final recommended items. This proves that UFN can mitigate the problem of false negative samples and improve the performance of the model. Moreover, the DPR also shows better performance than the baseline methods without UFN. This demonstrates that the two modules proposed in this paper are useful for model performance improvement and that the effects can be superimposed. UFN can further improve the performance of the loss function on both BPR and DPR, indicating that it can be widely used in the pairwise loss function to mitigate the interference caused by false negative samples.
\begin{figure}[tp]
		\centering
		\includegraphics[width=0.55\linewidth]{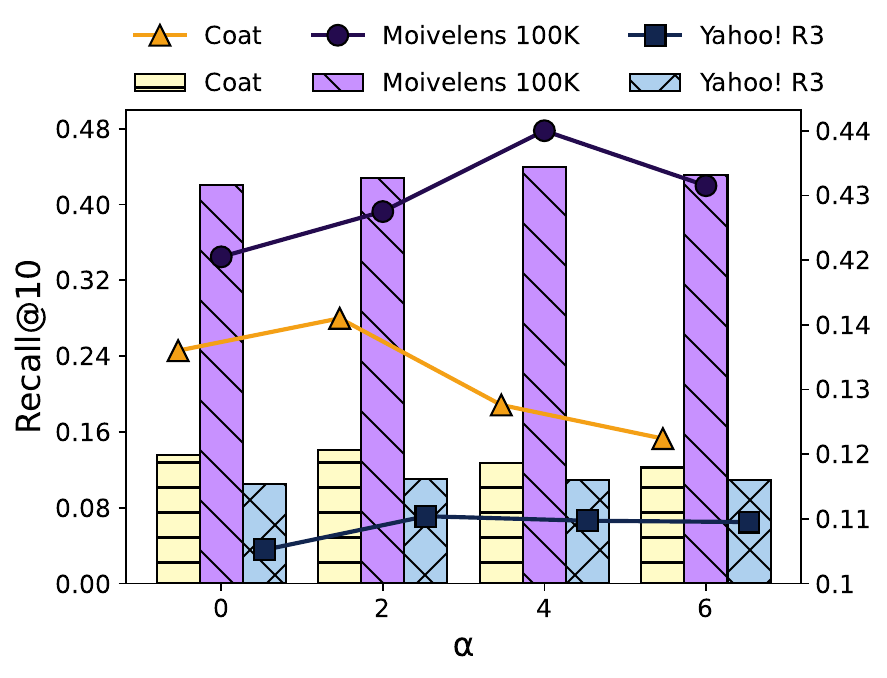}
		\caption{Performance comparison of different levels of $\alpha$. Both y-axes are Recall@10, with the right side used to highlight trends.}
		\label{fig:6}
\end{figure}
\subsection{Variants of DPR loss.} 
We compare the performance of DPR with different levels of feedback loop parameter $\alpha$ on different datasets. As shown in Figure \ref{fig:6}, DPR performs best with $\alpha = 2$ on Coat, which is a dataset with short feedback loops. In addition, DPR performs best with $\alpha = 4$ on Movielens 100K, which is a dataset that suffers from more feedback loops. An increase in the starting $\alpha$ improves the performance of the model on Yahoo! R3, but the model is not as sensitive to an increase in $\alpha$ as it is on Movielens 100K. We speculate that this may be due to the higher sparsity of the data. The different levels of optimal $\alpha$ in the three datasets demonstrate the ability of $\alpha$ to mitigate the different lengths of feedback loops suffered by datasets.\par
Moreover, the performance of DPR deteriorates when the level of alpha continues to increase. The results match our previous analysis of exposure mechanisms, where having the same exposure for all items is not necessarily beneficial for personalized recommendations. Compared to a fully fair exposure mechanism, a personalized user-oriented exposure mechanism is the ideal exposure mechanism. This means that full debiasing does not achieve the best recommendation performance.\par

\begin{table}[tp]
\renewcommand\arraystretch{1.25}
		\centering
		\caption{Comparison between BPR and DPR for different backbones on Movielens 100K; \%Improv.: percentage of improvement on Recall of DPR over BPR. }
		\label{tab:7}
\scalebox{1.0}{
\begin{tabular}{@{}cccccc@{}}
\toprule
\multirow{2}{*}{Methods} & \multicolumn{2}{l}{BPR} & \multicolumn{2}{l}{DPR} & \multirow{2}{*}{\textbf{\%Improv.}} \\ \cmidrule(lr){2-5} 
                         & Recall@20    & ARP@20   & Recall@20    & ARP@20   &     \\ \cmidrule(r){1-5}
MF                       & 0.7125       & 97.02    & 0.7974       & 95.22    & 11.92\%  \\
LightGCN                 & 0.7910       & 108.41    & 0.8176       & 88.76    & 3.36\%   \\
DIN                      & 0.7688       & 125.32    & 0.8057       & 96.42    & 5.07\%    \\
Wide\&Deep               & 0.8021       & 113.44    & 0.8245       & 85.71    & 2.79\%  \\ \bottomrule
\end{tabular}}
\end{table}
\subsection{Diversity backbone experiments} To demonstrate the generalizability of DPR, we conduct comparison experiments on multiple types of backbone. Here, \textbf{MF} is representative of DNN-based models, \textbf{LightGCN} is representative of graph-based models, \textbf{DIN}\cite{ref29} is representative of self-attention-based models, and \textbf{Wide\&Deep}\cite{ref30} is representative of MLP-based models. As shown in Table \ref{tab:7}, DPR demonstrated superior performance with an average improvement of 5.79 \% over BPR on four different backbones. The final results using DPR on MF can compete with and even outperform complex algorithms such as LightGCN using BPR, and complex models such as DIN can be further improved by using DPR, thus demonstrating that DPR can further develop the potential of the model itself and achieve better performance.\par
The outstanding performance of DPR on different classes of baselines highlights that DPR is an effective algorithm that can be widely used in mainstream recommendation models today. We believe that DPR has the potential to replace BPR as a commonly used loss function because of its ability to achieve high accuracy and powerful debiasing effects.\par
\subsection{Universal applicability of UFN and comparison of different levels of \texorpdfstring{$\beta$}.} We deploy UFN on multiple baseline methods to demonstrate its universality. Figure \ref{fig:5} shows that the baseline methods all show a certain improvement after the deployment of UFN. Among them, pairwise-based methods have a more obvious improvement, while the improvement of the pointwise method is smaller. Probably because they have different objectives, pointwise methods place more importance on the accuracy of prediction scores than pairwise methods, and the deployment of UFN may affect the accuracy of model prediction scores. For pairwise-based methods, UFN can help to widen the score gap between positive and negative items, leading to better performance.\par
We compare the performance of DPR and baseline methods with different levels of parameter $\beta$. As shown in Figure \ref{fig:7}, Initially, the model performs better with increasing $\beta$ since a larger beta means that the model focuses more on false negative samples, which improves the model performance. However, the high attention to false negative samples means that the effect of some strongly negative samples is equally compromised so that the model becomes worse when $\beta$ grows beyond a certain point and continues to increase. To our knowledge, items as false negative examples are more likely to be cold items, gaining low exposure due to their low initial popularity and deteriorating as the feedback loop proceeds, transforming into false negative examples because they have no opportunity to interact with users. Thus, the average popularity of the final recommendation keeps decreasing as the beta increases.
\begin{figure}[tp]
\centering
\centering
\centerline{\includegraphics[width = 0.7\linewidth]{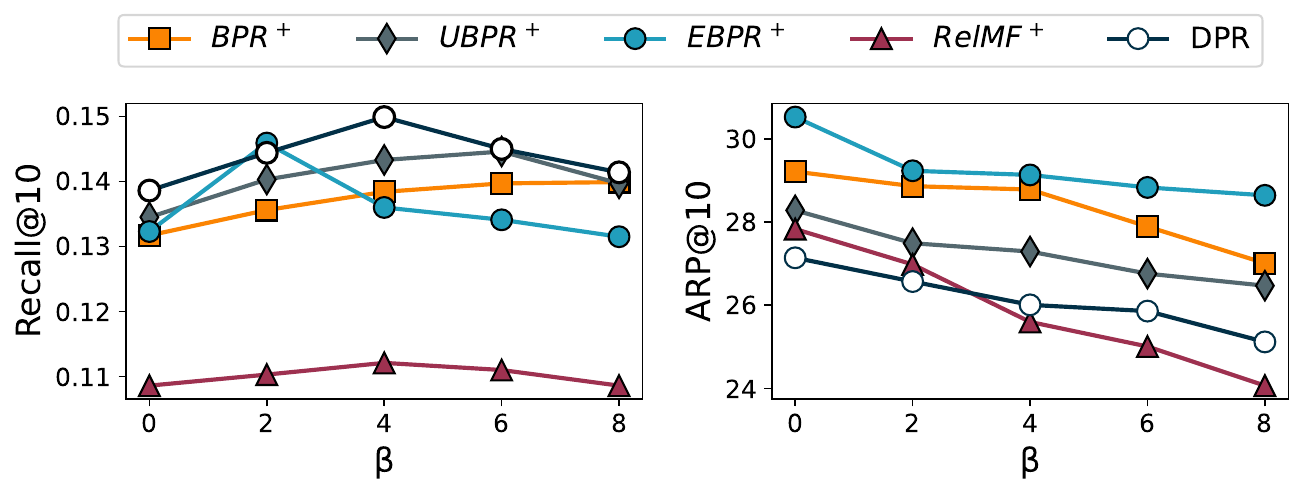}}
\caption{Universal applicability of UFN and comparison of different levels of $\beta$.}%
\label{fig:7}
\end{figure}
\hfill
\begin{figure}[tp]
\centering
\includegraphics[width = 0.7\linewidth]{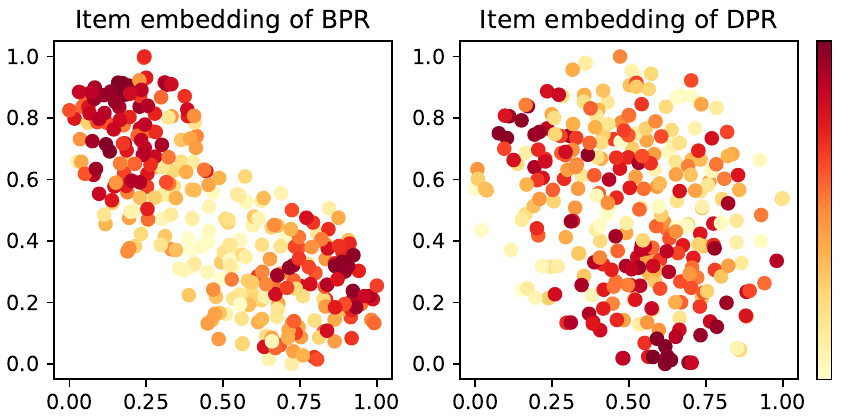}
\caption{Visualization of the learned item embeddings of BPR and DPR.}%
\label{fig:8}
\centering
\end{figure}
\subsection{Interpretability based on model embeddings}
As previously introduced, DPR can help the model mitigate accumulated bias in feedback loops and model the true preferences of users. We investigate whether DPR has such advantages by visualizing the learned item embeddings using t-SNE \cite{ref28}. \par
Figure \ref{fig:8} shows the learned item embeddings of BPR and DPR on MF, where the items with different popularity are painted in different colors. We observe that the embeddings of BPR are significantly separated according to item popularity, with popular items distributed at both ends, meaning that popular items receive more attention from the model. Items with higher popularity are considered high-quality items by the model and thus receive higher exposure. At this point, the dominant factor in the recommendation of the model is not the properties of the item itself, but other external factors such as popularity. The accumulated bias in the feedback data is inherited by the trained model and will be amplified via feedback loops.\par
On the other hand, there is no significant difference between popular and unpopular items in the embeddings of DPR, which means that explicit properties in the dataset, such as item popularity, do not play an influential role in the modeling process, but the true identity of items does. DPR can help reduce the influence of irrelevant factors in the training process and establish more reliable user and item embedding, so as to achieve unbiased recommendations. Visualization of the learned item embeddings illustrates the advantages of DPR and makes reasonable interpretations. 
\section{Conclusion and Future Work}
In this work, we show that exposure mechanisms and feedback loops have a cross-effect on the recommendation model and find the key factor for the transformation of the exposure mechanism under feedback loops. Moreover, we revisit the commonly used pairwise ranking methods and point out that they are biased in modeling the true preferences of users. As a result, we propose DPR, an unbiased pairwise ranking method where the cross-effects of exposure mechanisms and feedback loops are mitigated by simply setting an optimal $\alpha$ without knowing the exact exposure mechanism and the feedback loop length. We conduct experiments on widely used models such as MLPs, and DIN to demonstrate the ease of use and effectiveness of DPR. DPR achieves better performance than other baseline methods in mitigating the negative effects of the feedback loops and unknown exposure mechanisms without sacrificing recommendation performance in most cases. \par
To mitigate the effect of false negative samples, we propose UFN, a removable plug-in with generality. UFN can further improve the performance of the loss function on pairwise-based methods, indicating that it can be widely used in the pairwise loss function to mitigate the interference caused by false negative samples. However, because the pointwise-based methods do not achieve positive versus negative comparison, it does not work well with UFN and therefore fail to achieve the same performance as the pairwise-based methods.\par
However, DPR is not able to exploit the favorable side of the exposure mechanism to improve the performance of personalized recommendations. In the future, we will conduct a deeper analysis of the exposure mechanism to explore and exploit its role in the recommendation system and further explore the identity of false negative samples and propose a more general approach. Moreover, personalized hyperparameters for datasets are demanding for the practical effectiveness of the proposed method, and further research on hyperparameter learnability will be conducted in the future to improve the ease of use of the proposed method.
\begin{acks}
This work was supported by the National Natural Science Foundation of Authority Youth Fund under Grant 62002123, the Key Project of Science and technology development Plan of Jilin Province under Grant 20210201082GX, the Science and Technology project of Education Department of Jilin Province under Grand JJKH20221010KJ, and Science and Technology Department project of Jilin Province under Grant 20230101067JC.
\end{acks}

% \nocite{*}
\bibliographystyle{ACM-Reference-Format}
\bibliography{sample-base}

\end{document}